%% file: arxiv_main.tex
\PassOptionsToPackage{table}{xcolor}
\documentclass[11pt, a4paper, copyright, goog, gr]{google}
\usepackage{kz_style}
\usepackage[authoryear, sort&compress, round]{natbib}
\usepackage{url}
\usepackage{cleveref}
\usepackage[most]{tcolorbox}
\usepackage{listings}
\usepackage{xcolor}
\usepackage{subcaption}
\usepackage{booktabs}
\usepackage{enumitem}
\theoremstyle{plain}
\newtheorem{theorem}{Theorem}[section]
\newtheorem{proposition}[theorem]{Proposition}

\theoremstyle{definition}

\theoremstyle{remark}
\newtheorem{remark}[theorem]{Remark}
 \newcommand{\vtwo}[1]{\textcolor{black}{#1}} 

\keywords{LLM agents, strategic reasoning, inference-time scaling, negotiation, in-context learning}

\uselogo{} 

\title{Scaling Inference-Time Computation via Opponent Simulation: Enabling Online Strategic Adaptation in Repeated Negotiation}

\correspondingauthor{$^\star$ Work done during an internship at Google Research. Correspondence to: Xiangyu Liu <xyliu999@umd.edu>.}

\author[1,2 $\star$]{Xiangyu Liu}
\author[1]{Di Wang}
\author[1]{Zhe Feng}
\author[1]{Aranyak Mehta}

\affil[1]{\thepa{}{}}
\affil[2]{University of Maryland, College Park}

\newcommand{\ours}{\texttt{BoN-oppo-simulation}}
\tcbset{
  promptbox/.style={
    colback=gray!5!white,
    colframe=gray!50!black,
    fonttitle=\bfseries,
    title=Prompt Example,
    sharp corners,
    boxrule=0.5pt,
    enhanced,
    breakable
  }
}
\input{Abs}

\begin{document}

\maketitle

\input{Introduction}

\input{RelatedWorks}

\input{Formulation}

\input{Inference}

\input{ExperimentsNew}
\bibliography{main,mybibfile}
\bibliographystyle{abbrvnat}

\appendix
\input{Prompts}
\input{AdditionalRelatedWorks}
\input{Proof}

\input{BaselineDiscussion}
\input{Algorithm}

\input{AdditionalResults}

\end{document}

%% file: Abs.tex
\begin{abstract}
While large language models (LLMs) have emerged as powerful decision-makers across a wide range of single-agent and stationary environments, fewer efforts have been devoted to settings where LLMs must engage in \emph{repeated} and \emph{strategic} interactions with unknown or dynamic opponents. In such settings, recipes built upon \emph{offline} pre-training or fine-tuning, though robust against worst-case adversaries, do not fully exploit the capability of LLMs to adapt \emph{online} based on interaction feedback. Instead, we explore the more natural perspective of scaling inference-time computation as a mechanism for adaptation, embedding the principles of a classical game-theoretical learning dynamic, \emph{smooth Fictitious Play (sFP)}, into LLM inference: (i) for belief formation, we employ an auxiliary opponent model that in-context learns to imitate the time-averaged behavior of the opponent; (ii) for best response, we advance best-of-$N$ (BoN) sampling by simulating against the opponent model. Empirical evaluations on two distinct forms of repeated negotiation games demonstrate that our method enables significant performance improvement over repeated online interaction compared to various baselines, offering a scalable and principled approach to repeated strategic decision-making without any parameter updates.
\end{abstract}

%% file: Introduction.tex
\section{Introduction}\label{sec:Introduction}
Recent years have witnessed the remarkable success of large language models (LLMs) as central controllers across a broad spectrum of decision-making and reasoning tasks, including computer agents \citep{kim2023language,zhou2024webarena}, robotics \citep{wang2024voyager, cui2024survey}, math/coding \citep{wei2022chain,kojima2022large, jimenez2024swebench}. Notably, substantial research efforts have focused on developing effective policies for relatively stationary and single-agent decision-making environments \citep{hao2023reasoning,yao2023react}.

Meanwhile, many applications also involve strategic interactions between the LLM-based agent and other decision-makers within the same system that are often unknown or may vary over time \citep{park2023generative, zhang2024llm}. One standard remedy involves computing static solutions such as the Minimax or Nash equilibrium through methods like \emph{self-play}, exemplified by systems like AlphaGo \citep{silver2016mastering, silver2017mastering} and recent strategic LLM agents powered by offline training \citep{meta2022human, guan2024richelieu, xu2025dipllm} or inference-time techniques \citep{kempinski2025game, light2025strategist}, which aim to converge to unexploitable policies against worst-case adversaries. However, such policies can be overly conservative especially in games involving both competition and cooperation \citep{leibo2017multi, jaques2019social} as shown later in \Cref{prop:dominant}. This highlights the necessity for LLM agents to adapt online to unknown or dynamic opponents and to progressively improve their decision-making policy by leveraging feedback accumulated during online interactions. Moreover, since online adaptation occurs dynamically at test time, recipes relying on gradient updates become less suitable, as they are data-hungry and introduce high latencies. Consequently, the paradigm of scaling \emph{inference-time computation} emerges as the natural alternative, especially considering its recent success in single-agent domains like math reasoning \citep{jaech2024openai, guo2025deepseek}. This motivates our central research question:
\begin{center}
\emph{Can we enable online strategic adaptation for LLMs in repeated strategic decision-making\\by scaling inference-time computation?}
\end{center}
To answer this question, we focus on the natural language-based negotiation game, a widely adopted benchmark for evaluating LLMs' strategic capability \citep{lewis2017deal, davidson2024evaluating,bianchi2024well,xia2024measuring}.  These games present unique challenges for LLMs due to the necessity of reasoning over private information, modeling opponent behaviors, planning for long-term objectives, and engaging in strategic communication. More importantly, this setting offers an ideal middle ground: it is far more sophisticated than symbolic normal-form games \citep{akata2025playing, kempinski2025game}, yet creates a more controlled environment than large-scale societies like Diplomacy \citep{meta2022human}, allowing us to rigorously isolate the deep strategic reasoning required to adapt to a specific opponent from the confounding factors of general multi-agent group dynamics, enabling a precise analysis of inference-time scaling effects. We further introduce a new repeated setting, where agents must also leverage historical feedback to inform their actions over time. We propose scaling inference-time computation by embedding the principles of \emph{smooth Fictitious Play} (sFP) \citep{brown1951iterative, robinson1951iterative, fudenberg1995consistency} into practical LLM inference. Our approach explicitly allocates test-time compute to two decoupled FP modules: (i) {\bf Belief formation}, where an auxiliary model in-context learns to mimic the opponent's \emph{time-averaged} behavior from history; and (ii) {\bf Best response}, where we advance best-of-$N$ (BoN) sampling by simulating full future trajectories for each candidate against the opponent model. By ranking strategies based on these computationally generated rollouts rather than static scoring, we effectively convert inference cost into strategic adaptation. We refer to our method as \ours.
\vspace{-5mm}
\paragraph{Contributions.} (1) We formalize and motivate our problem setting, demonstrating both theoretically and empirically the importance of engaging in repeated interactions and the failure of current LLMs to self-improve in such settings without additional inference-time interventions. (2) We then propose a general and principled framework for scaling inference-time computation to enable online strategic adaptation for repeated strategic decision-making. (3) Finally, we provide systematic empirical investigations, offering insights into the effectiveness of different candidate generation processes and evaluation strategies, as well as comparisons between thinking \emph{wider} versus \emph{deeper}, and demonstrate our framework achieves significant self-improvement.

%% file: RelatedWorks.tex
\vspace{-3mm}
\section{Related Works}
\vspace{-3mm}
\paragraph{Language models for negotiation games.} There has been a rich line of literature on negotiation games in various disciplines from game theory, economics, to psychology with a pre-defined symbolic action space. Beyond environments with standardized inputs and outputs, combining modern NLP and RL techniques for negotiation with unrestricted natural languages dates back to \cite{lewis2017deal}, which trained an end-to-end recurrent neural network by imitating human dialogues followed by goal-based RL training and decoding. \cite{he2018decoupling} further proposed to first generate the coarse dialogue acts and then use a generator to generate the actual natural dialogues. More recently, with LLMs as reliable natural language processing and understanding interfaces, numerous works have attempted to benchmark the (native) negotiation ability in different negotiation settings \citep{davidson2024evaluating,bianchi2024well, xia2024measuring}. Meanwhile, there has also been a surging interest in improving the negotiation ability of LLMs with various techniques \citep{hua2024game, gemp2024states, liu-etal-2025-epo,zhang2025k}. These existing works mainly focus on how to learn a single policy with better performance in a single episode of the negotiation instead of enabling online adaptation and continual improvement over repeated interaction as in our paper. 
\vspace{-3mm}
\paragraph{LLM agents for general strategic decision-making.}
With LLMs being employed as the central controller for various (single-agent) decision-making problems \citep{yao2023react, shinn2023reflexion,zhou2024language, wang2024survey}, there have been efforts dedicated to evaluating the reasoning and decision-making capability of LLMs in the more challenging strategic environments including normal-form games \citep{akata2025playing,brookins2024playing, lore2023strategic, fan2024can, kempinski2025game}, bandits \citep{krishnamurthy2024can,nie2024evolve,xia2024beyond}, expert problems \citep{park2025do} with well-specified symbolic action space. There have also been related works on more specific game-theoretical domains, e.g., Diplomacy, Werewolf, as well as negotiation games above. These works can be roughly divided into two categories based on their methodology. The first line including \citep{meta2022human,guan2024richelieu, xu2024language, xu2025dipllm} leverages various training techniques (fine-tuning, self-play, RL, etc) aiming to learn a policy that can be deployed \emph{statically} to outperform arbitrary adversaries. Such a static solution can be overly conservative and arbitrarily suboptimal in our repeated negotiation setting (cf. \Cref{prop:dominant}). Relying on parameter updates also makes it less suitable for online adaptation that occurs at test time. The second line including \citep{fu2023improving, xu2023exploring,light2025strategist,yu2025llm, kempinski2025game} is free from parameter updates. These can be further divided into two sub-categories: (1) Input-level prompt engineering \citep{fu2023improving, xu2023exploring,yu2025llm} (2) Output-level search \citep{kempinski2025game,light2025strategist}, Among these, only \cite{fu2023improving, xu2023exploring,yu2025llm} are relevant to online adaptation, which we will further discuss and compare in \Cref{sec:inf} and \Cref{app:baselines}, while others still focus on the equilibria objective.

We refer more literature reviews on opponent modeling and inference-time scaling techniques in LLMs to \Cref{app:add-related}.

%% file: Formulation.tex
\vspace{-3mm}
\section{Preliminaries}\label{sec:prelim}
\vspace{-3mm}
The negotiation task has emerged as an important benchmark for examining the strategic reasoning abilities of LLMs. In this paper, we focus on two specific versions, the buyer-seller game and the resource exchange game \citep{rubinstein1982perfect, deng2024llms, bianchi2024well}. Both games involve an agent $1$ and an agent $2$ (i.e., LLMs). 
\begin{itemize}[leftmargin=1em]
	\item For the buyer-seller game, the buyer, who has a private maximum budget, aims to acquire an item from the seller who has a private production cost. If a deal is reached, the reward for the seller is defined as the difference between the deal price and the production cost, and the reward for the buyer is defined as the difference between the budget and the deal price. If no deal is reached, both get $0$ reward.
	\item 	For the resource exchange game, each agent $i\in[2]$ holds a certain amount of different resources, for example, $n_{i}^X$ of $X$, and $n_{i}^Y$ of $Y$ with valuation of $v_{i}^X$ and $v_{i}^Y$ per unit of resource respectively for some $n_i^X, n_i^Y\in\NN$ and $v_i^X, v_i^Y\in \RR^{\ge 0}$. In such a setting, agents need to collaborate to trade less valuable resources for the more valuable ones. Each agent's reward is the net change in the total value of its resources.
\end{itemize}
\vspace{-4mm}
In this paper, we are interested in the setting where the game is played repeatedly for $T\in\NN$ episodes, where each episode further consists of (up to) a given horizon $H$ of turns (or steps). Formally, the protocol can be described as follows. We denote $x_1, x_2$ as the system prompts for describing the necessary game rules as well as the separate \emph{private information} for the two agents. At each episode $t\in[T]$, step $h\in[H]$, agent $P(h)\in[2]$ takes an action $y_{P(h), h}^t=(y_{P(h), h}^{t, p}, y_{P(h), h}^{t, m})$,
 where $y_{P(h), h}^{t, p}$ encodes the structured information for a new proposal, acceptance, rejection, or waiting for a proposal, $y_{P(h), h}^{t, m}$ represents a free-format message to be sent to the opponent, and we define the space for $y_{P(h), h}^{t, p}$, $y_{P(h), h}^{t, m}$ as $\cY_{P(h)}^{p}$, $\cY_{P(h)}^{m}$ respectively. If agent 1 starts first, we have $P(h) = 2 - (h \% 2)$; otherwise, $P(h) = 1 + (h \% 2)$. We also let $\tau_{h}^t:=(y_{P(1), 1}^t, y_{P(2), 2}^t, y_{P(3), 3}^t, \cdots, y_{P(h-1), h-1}^t)$ denote the concatenated conversation history up to step $h$ within episode $t$, and $\cC^{t-1}:=(\tau^1_{H+1}, \tau^2_{H+1}, \cdots, \tau^{t-1}_{H+1})$ denotes the history of completed negotiations from episode $1$ to $t-1$, which serves as the context\footnote{An episode $t^\prime\in[t-1]$ may terminate earlier before reaching the maximum turn $H$. In such cases, we slightly abuse our notation to still use the $\tau_{H+1}^{t^\prime}$ to indicate the whole trajectory of an episode.}. At the end of episode $t$, agents $1$ and $2$ receive rewards $r_1^t$ and $r_2^t$, respectively. The game ends immediately if a proposal is accepted or rejected, or if the maximum number of turns is exceeded. By default, each agent $i\in[2]$ uses a policy in the form of $\pi_{i, h}^{t}(\cdot\given \tau_{h}^t; \cC^{t-1}, x_{i})$ for each $h\in[H]$ where $P(h)=i$ and we denote the corresponding policy class as $\Pi_{i}^t$. Finally, we denote the expected reward of a single episode as $J_{i}(\pi_1^{t}, \pi_2^{t}):=\EE[r_i\given \tau_{H+1}^{t}\sim (\pi_1^{t}, \pi_2^{t})]$. Throughout our paper, we mainly take the perspective of agent 1 and regard agent 2 as the opponent.

%% file: Inference.tex
\vspace{-3mm}
\section{Methods}\label{sec:method}
\vspace{-3mm}
\input{Prompt-New}
\vspace{-3mm}
\subsection{Fictitious play for adaptive decision-making}\label{subsec:fp}
Learning in games offers a solid theoretical foundation for equipping agents with adaptive decision-making capabilities when facing unknown or even adversarial opponents. One notable learning dynamic is \emph{(smooth) Fictitious Play} (sFP) \citep{brown1951iterative, robinson1951iterative, fudenberg1995consistency}, where the agent maintains a belief over the opponent's actions and best responds to the belief at each episode. Specifically, taking the example of normal-form games, at each episode $t\in[T]$, the learning process for agent 1 can be described as follows
\vspace{-3mm}
\begin{itemize}[leftmargin=1em]
    \item \textbf{Step 1: Belief formation.} Agent 1 forms a belief about its opponent's policy $\hat{\pi}_2^{t}\in\Delta(\cB)$ by tracking the empirical frequency of the opponent's historical actions. For each opponent's action $b\in\cB$, if agent $2$ has played action $b$ for a total of $k$ times over the past $t-1$ episodes, the belief is $\hat{\pi}_2^{t}(b) = k / (t-1)$, where $\cB$ denotes the action space of agent 2.
    \item \textbf{Step 2: (Perturbed) best response.} Agent 1 computes a (perturbed) best response $\pi_1^{t}\in\Delta(\cA)$ against this belief $\hat{\pi}_2^t$ such that for any $a\in\cA$,
    \[
    \pi_1^{t}(a) =\PP\big(a\in \underset{a^\prime\in\cA}{\text{argmax}} \ \mathbb{E}_{b \sim \hat{\pi}_2^{t}} [r_1(a^\prime, b)] + \eta_t \epsilon(a^\prime)\big),
    \]
    where $\cA$ and $r_1\in[0, 1]$ denote the action space and reward function of agent 1. The perturbation term $\epsilon\in\RR^{|\cA|}$ is sampled from some given noise distribution $P_{\text{noise}}$ and $\eta_t\in\RR^{+}$. Notably, it introduces randomness to agent 1's policy, preventing it from being exploited by the opponents, and is the key to achieving strong adaptive decision-making ability in the form of being \emph{no-regret}.  
\end{itemize}
\begin{proposition}\label{prop:regret}
	Define the (external) regret as $\text{Regret}(T)=\max_{\pi_1\in\Delta(\cA)}\sum_{t=1}^T V_1(\pi_1, \pi_2^t)-V_1(\pi_1^t, \pi_2^t)$, where we denote $V_1(\pi_1, \pi_2):=\EE_{a\sim\pi_1, b\sim\pi_2}r_1(a, b)$ for any $\pi_1\in\Delta(\cA), \pi_2\in\Delta(\cB)$. Suppose the perturbation is drawn from a standard Gaussian distribution. Then if $\eta^t = \Theta(1/\sqrt{t})$, it holds that $\EE[\text{Regret}(T)]=\cO(\sqrt{T})$ \emph{for any} unknown policies $\pi^{1:T}_2$ played by the opponent.
\end{proposition}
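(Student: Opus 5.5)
The plan is to recognize smooth Fictitious Play with Gaussian perturbations as Follow-the-Perturbed-Leader (FTPL) on the cumulative reward vector, and then run the standard FTPL regret analysis. Write $R^{t}\in\RR^{|\cA|}$ with $R^t(a)=\sum_{s<t} r_1(a,b^s)$, where $b^s$ is the opponent's realized action at episode $s$. Since $\hat\pi_2^t$ is the empirical frequency, we have $\EE_{b\sim\hat\pi_2^t}[r_1(a,b)] = R^t(a)/(t-1)$. Hence
\[
\pi_1^t(a)=\PP\big(a\in\operatorname{argmax}_{a'} R^t(a')+(t-1)\eta_t\epsilon(a')\big),
\]
which is FTPL with effective noise scale $\sigma_t:=(t-1)\eta_t=\Theta(\sqrt t)$ when $\eta_t=\Theta(1/\sqrt t)$. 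A preliminary point: the regret in the statement is measured against $\pi_2^t$ through $V_1$, while the belief uses realized actions. I will therefore bound regret against the realized rewards $r_1(\cdot,b^t)$. Passing to $V_1(\cdot,\pi_2^t)$ then follows by taking expectations, because $b^t$ is drawn independently of agent 1's fresh noise given the history. For an adaptive opponent, the difference between the two is a martingale that vanishes in expectation, since both sides are linear in the reward vector.

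Next I will use the standard trick of fixing a single noise draw $\epsilon$ and a time-varying scale $\sigma_t$. Because $\pi_1^t$ depends only on the marginal law of the noise, the expected reward is unchanged if the same $\epsilon$ is reused across rounds, so the analysis may assume a fixed $\epsilon$. Let $\Phi_t(R)=\EE_\epsilon[\max_a (R(a)+\sigma_t\epsilon(a))]$, which is convex and whose gradient is exactly $\pi_1^t$. The regret then splits into two parts. The first is a \emph{stability} term $\sum_t \big(\Phi_t(R^{t+1})-\Phi_t(R^t)-\langle\nabla\Phi_t(R^t), r^t\rangle\big)$, where $r^t=r_1(\cdot,b^t)$. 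The second is a \emph{perturbation} term bounded by $\sigma_T\,\EE\max_a\epsilon(a)$ plus telescoping contributions from the increasing $\sigma_t$, which are nonpositive or likewise controlled by $\EE\max_a\epsilon(a)\,(\sigma_T-\sigma_1)$. For Gaussian noise, $\EE\max_a\epsilon(a)\le\sqrt{2\log|\cA|}$, so the perturbation term is $\cO(\sqrt{T\log|\cA|})$.

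The main obstacle is the stability term. I will bound it through the smoothness of Gaussian smoothing. Stein's lemma gives Hessian entries of $\Phi_t$ bounded in operator norm with respect to $\|\cdot\|_\infty$ by $\cO(\sqrt{\log|\cA|}/\sigma_t)$; this is the known bound for Gaussian-smoothed max, in the style of Abernethy et al. A second-order Taylor expansion with $\|r^t\|_\infty\le1$ then yields per-round error $\cO(\sqrt{\log|\cA|}/\sigma_t)=\cO(1/\sqrt t)$, and summing over $t$ gives $\cO(\sqrt T)$. If the sharp Hessian bound proves delicate, the fallback is a cruder total-variation argument: shifting the Gaussian mean by a vector of $\ell_\infty$-norm $\le1$ changes the argmax distribution by $\cO(\sqrt{|\cA|}/\sigma_t)$ in total variation. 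This still gives $\cO(\sqrt T)$ with $|\cA|$ treated as a constant, which suffices for the stated bound. Combining the two terms, and handling $t=1$ by convention with a bounded $\cO(1)$ loss, yields $\EE[\text{Regret}(T)]=\cO(\sqrt T)$ for arbitrary opponent sequences $\pi_2^{1:T}$.
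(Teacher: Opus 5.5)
Your proposal follows the same route as the paper: rewrite sFP as FTPL on the cumulative reward vector with effective noise scale $(t-1)\eta_t=\Theta(\sqrt t)$, apply the Gaussian-smoothing FTPL bound of Abernethy et al., and take expectations over the opponent's realized actions. The paper simply cites that bound as their Theorem~8, whereas you re-derive its stability/perturbation decomposition and also handle the degenerate round $t=1$ (where $\sigma_1=0$), which the paper's displayed bound glosses over. One small correction concerns the adaptive-opponent step: the comparator $\max_{\pi_1}\sum_t V_1(\pi_1,\pi_2^t)$ sits inside the expectation, so the martingale $\sum_t\big(V_1(a,\pi_2^t)-r_1(a,b^t)\big)$ does not vanish after the max over $a$. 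It instead contributes an extra $\cO(\sqrt{T\log|\cA|})$ via Azuma plus a union bound, and for oblivious $\pi_2^{1:T}$ the inequality $\max\EE\le\EE\max$ suffices, so the $\cO(\sqrt T)$ rate is unaffected.
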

\begin{remark}[Connections to online adaptation]
Such guarantees are made possible by the equivalence between the sFP and the well-known online learning algorithm, follow-the-perturbed-leader (FTPL) \citep{kalai2005efficient}, where the noise distribution can also be the Laplace distribution, Gumbel distribution, etc. \citep{abernethy2014online}. It implies that when $T$ becomes sufficiently large, the average performance of the agent $1$ is comparable to that of the best policy in hindsight. In particular, when the opponent is stationary, as $T$ increases, the average performance of the agent $1$ gradually approaches the optimum.
\end{remark}
\vspace{-3mm}
While this dynamic is elegant for normal-form games, implementing it directly in LLMs faces two fundamental computational barriers: (i) Exponentially large natural language action space implies exact historical actions rarely repeat, making frequency-based belief formation impossible; (ii) The exact $\arg\max$ is intractable to compute over natural languages. In the following, we will discuss how to approximate these two steps by scaling inference-time computation.

\vspace{-2mm}
\subsection{Step 1: in-context opponent modeling}\label{subsec:icl-oppo}
Translating \textbf{Step 1} to the language domain can be done by maintaining the \emph{time-averaged} opponent's policy given the historic context. Specifically, an ideal solution to address the intractability of exponentially large natural language space would be leveraging the inductive bias of a pre-trained language model $\pi_\theta$ by fine-tuning it towards mimicking the opponent's behavior given the historical contexts $\cC^{t-1}=(\tau^1_{H+1}, \tau^2_{H+1}, \cdots, \tau^{t-1}_{H+1})$ at each episode $t\in[T]$ using the objective of 
$
\arg\max_{\theta} \sum_{t^\prime=1}^{t-1}\sum_{h:P(h)=2}\log\pi_\theta(y_{2, h}^{t^\prime}\given \tau_{h}^{t^\prime})
$.

However, repeated fine-tuning is data-hungry and incurs prohibitive overheads, making it less suitable for real-time online adaptation. Consequently, we propose to leverage an off-the-shelf LLM $\pi_2^{\text{oppo}}$ to \emph{in-context learn} to imitate the behavior of the opponent using historical interactions $\cC^{t-1}$. Specifically, at each episode $t\in[T]$ and step $h\in[H]$, where $P(h)=2$, the opponent model $\pi_2^{\text{oppo}}$ takes the input of historical interactions $\cC^{t-1}$, the current partial trajectory $\tau_{h-1}^{t}$ as well as the additional prompt $p$ that instructs $\pi_2^{\text{oppo}}$ to role-play the actual opponent to predict its behavior at this time step. This instruction prompt $p$ incorporates two key designs: (i) {\bf Strategic summarization:} $\pi_2^{\text{oppo}}$ is required to first explicitly reflect on the contexts $\cC^{t-1}$ and summarize the high-level strategic behavioral patterns of the actual opponent; (ii) {\bf Optimism:} We embed the principle of optimism in face of uncertainty (OFU), a principled exploration mechanism from online RL. Specifically, when $\pi_2^{\text{oppo}}$ is uncertain about how the actual opponent would have responded at the current step, it is biased to predict outcomes that favor agent 1. We refer the reader to \Cref{app:prompts} for the specific prompts. Finally, we remark that \textbf{Step 1} of sFP (and our corresponding opponent modeling approach) maintains only the \emph{time-averaged behavior} of the opponent, effectively treating the opponent as if it were \emph{stationary}. However, this does not hinder the learner's ability of online adaptation when the opponent follows a time-varying policy sequence, as established in Proposition \ref{prop:regret}.

\subsection{Step 2: BoN with opponent simulation}\label{subsec:BoN}
 Implementing \textbf{Step 2} requires solving the intractable maximization problem over the natural language space. This is further complicated by the multi-turn nature of negotiation, where intermediate reward signals are missing. To address these computational hurdles, given the base LLM $\pi_1^{\text{base}}$, at each decision point of agent 1, $\tau_{h}$, where $P(h)=1$, we first sample $N$ candidate actions $\cD_{1, h}:=\{y_{1, h}^{1}, \cdots, y_{1, h}^{N}\}$ from $\pi_1^{\text{base}}$. Different from the vanilla version of BoN which typically samples candidates i.i.d., we propose to first generate $N$ strategic proposals and then devise separate actions based on each proposal. We refer to this structured method as strategic brainstorming. Intuitively, this structured process ensures broader exploration of the strategy space. Crucially, during generation at episode $t\in[T]$ and each step $h\in[H]$, $\pi_1^{\text{base}}$ maintains not only (partial) history of the current episode, but also the history from episode $1$ to $t-1$. We explicitly allocate tokens to summarize and reflect on the history and then make corresponding decisions. Such summarization \citep{krishnamurthy2024can} and reflection \citep{shinn2023reflexion} techniques have been shown to be necessary for enabling feedback-driven learning. 

Now we evaluate each $y_{1, h}^{k}$ for $k\in[N]$ as follows. Due to the lack of an immediate reward signal, we propose to first follow $y_{1, h}^{k}$ at the current time step $h$, and \emph{simulate} the entire future trajectory by following agent 1's base policy $\pi_1^{\text{base}}$ together with the opponent model $\pi_2^{\text{oppo}}$ to obtain the reward $\hat{r}^{k}_1$ for agent 1. Then the algorithm picks the best candidate action $y_{1, h}^{k^\star}$ with $k^\star\in\argmax_{k\in[N]} \hat{r}^{k}_1$ and the decision-making process proceeds to the next time step. Finally, since both the candidate generation and opponent simulation involve inherent stochasticity, we empirically find that there is no need to further perturb the simulated reward as in \textbf{Step 2} of \Cref{subsec:fp}. 
To theoretically validate that better opponent model translates to more reliable evaluations and superior policy, we analyze the error propagation as follows.
\vtwo{
\begin{theorem}\label{thm:error}
	Fix a given episode $t\in[T]$ with given initial prompts $x_1$, $x_2$, historical context $\cC^{t-1}$, opponent policy $\pi^{t}_2\in\Pi_2^t$, as well as the opponent model $\pi_2^{\text{oppo}}$. For any step $h\in[H]$ with $P(h)=2$, we define the opponent error as $d_{TV}\left(\pi_{2, h}^t(\cdot\given \tau_h^t; \cC^{t-1}, x_2), \pi_{2, h}^{\text{oppo}}(\cdot\given \tau_h^t; \cC^{t-1})\right)\le \epsilon_h$ for any decision point $\tau_h^t\in \cT_h^t$. Then it holds for any policy $\pi_1^t\in\Pi_1^t$, step $h\in[H]$ with $P(h)=1$, and $\tau_h^t\in\cT_h^t$:
	\small
	\$
	\left|V_{1, h}^{\pi_1^t, \pi_2^{\text{oppo}}}(\tau_h^t)-V_{1, h}^{\pi_1^t, \pi_2^{t}}(\tau_h^t)\right|\le \sum_{d=0}^{\lfloor(H-h-1)/2\rfloor}\epsilon_{h+2d+1},
	\$
	\normalsize
	where $d_{TV}$ denotes the total variation distance and the value functions are defined as $V_{1, h}^{\pi_1^t, \pi_2^{\text{oppo}}}(\tau_h^t):=\EE_{\pi_1^t, \pi_2^{\text{oppo}}}\left[r_1\given \tau_h^t \right]$, $V_{1, h}^{\pi_1^t, \pi_2^{t}}(\tau_h^t):=\EE_{\pi_1^t, \pi_2^{t}}\left[r_1\given \tau_h^t \right]$, and we assume the reward is properly normalized into range $[0, 1]$.
	Furthermore, let $\hat{\pi}_1^{t}\in\arg\max_{\pi_1^t\in\Pi_i^t} J_1(\pi_1^t, \pi_2^{\text{oppo}})$ be the optimal policy against the opponent model. It holds that $J_1(\hat{\pi}_1^t, \pi_2^t)\ge\max_{\pi_1^t\in\Pi_1^t}J_1(\pi_1^t, \pi_2^t)-\sum_{h\in[H]: P(h)=2}\epsilon_{h}$.
\end{theorem}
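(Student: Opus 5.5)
The plan is a simulation-lemma style backward induction over the steps of episode $t$. Throughout, fix the episode and the context $\cC^{t-1}$, and let $\Delta_h(\tau_h^t):=\big|V_{1, h}^{\pi_1^t, \pi_2^{\text{oppo}}}(\tau_h^t)-V_{1, h}^{\pi_1^t, \pi_2^{t}}(\tau_h^t)\big|$, defined at every step $h$ and not only where $P(h)=1$. The value functions obey the one-step recursion $V_{1,h}(\tau_h^t)=\EE_{y\sim \pi_{P(h),h}(\cdot\given\tau_h^t)}[V_{1,h+1}(\tau_h^t,y)]$. Terminal or early-stopped trajectories get value equal to the realized reward, which lies in $[0,1]$, so $\Delta\equiv 0$ there.

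The induction then splits by whose turn it is.
\begin{itemize}[leftmargin=1em]
\item \textbf{Agent 1's step.} Both value functions draw $y$ from the same $\pi_1^t$, so $\Delta_h(\tau_h^t)\le \EE_{y\sim\pi_{1,h}^t}\,\Delta_{h+1}(\tau_h^t,y)$ and no error is added.
\item \textbf{Agent 2's step.} Write the difference as
\[
\EE_{\pi^{\text{oppo}}_{2,h}}\big[V^{\text{oppo}}_{1,h+1}-V^{t}_{1,h+1}\big]+\Big(\EE_{\pi^{\text{oppo}}_{2,h}}-\EE_{\pi^{t}_{2,h}}\Big)\big[V^{t}_{1,h+1}\big].
\]
The first term is at most $\sup\Delta_{h+1}$. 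Since $V^t_{1,h+1}\in[0,1]$, the second term is at most $d_{TV}(\pi^t_{2,h},\pi^{\text{oppo}}_{2,h})\le\epsilon_h$. This uses the convention $d_{TV}=\sup_A|P(A)-Q(A)|$, so that $|\EE_P f-\EE_Q f|\le d_{TV}(P,Q)$ for $f\in[0,1]$. The hypothesis holds for every decision point, so this bound is uniform.
\end{itemize}
Unrolling from step $h$ with $P(h)=1$ to $H$, errors are incurred only at agent 2's steps $h+1,h+3,\dots$, i.e.\ $h+2d+1\le H$, which gives $d\le\lfloor (H-h-1)/2\rfloor$. This is exactly the claimed sum.

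For the second claim, apply the same argument from the initial state with $h=1$. The initial state is reached identically under both opponents, since $x_1$ and $x_2$ are fixed. This yields $|J_1(\pi_1,\pi_2^{\text{oppo}})-J_1(\pi_1,\pi_2^t)|\le E:=\sum_{h:P(h)=2}\epsilon_h$ for every $\pi_1\in\Pi_1^t$. If agent 2 moves first, the step-1 error is included, which is why the sum runs over all $h$ with $P(h)=2$. Let $\pi_1^\star$ maximize $J_1(\cdot,\pi_2^t)$. Then the standard chain gives
\[
J_1(\hat\pi_1^t,\pi_2^t)\ge J_1(\hat\pi_1^t,\pi_2^{\text{oppo}})-E\ge J_1(\pi_1^\star,\pi_2^{\text{oppo}})-E\ge J_1(\pi_1^\star,\pi_2^t)-2E.
\]

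\textbf{Main obstacle.} The chain above gives $2E$, but the statement has only $E$. To remove the factor of two, I would not bound the two comparison steps separately. Instead I would bound the difference of differences:
\[
\big[J_1(\pi_1^\star,\pi_2^t)-J_1(\pi_1^\star,\pi_2^{\text{oppo}})\big]+\big[J_1(\hat\pi_1^t,\pi_2^{\text{oppo}})-J_1(\hat\pi_1^t,\pi_2^t)\big].
\]
I would try a coupling at each opponent step: couple $\pi^t_{2,h}$ and $\pi^{\text{oppo}}_{2,h}$ so they disagree with probability at most $\epsilon_h$. Both bracketed terms are nonzero only on the disagreement event, and on that event the combined quantity lies in $[-1,1]$ because rewards are in $[0,1]$. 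It may be necessary to argue per-step, using the one-sided signs of each bracket, that the total loss is at most $\epsilon_h$ per opponent step.

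If this coupling does not close, the fallback is to note that the first part of the theorem is exactly the simulation bound. The suboptimality claim then follows up to this constant factor, and I would check whether the intended statement absorbs the factor $2$.
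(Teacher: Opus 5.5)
Your first part is correct and is the same simulation-lemma backward induction the paper uses. The paper folds an agent-1 step and the following agent-2 step into one Bellman recursion. You recurse one step at a time, which also covers the case where agent 2 moves first; the paper only asserts that case is similar.

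The genuine gap is in the second claim, and your suspicion about the factor two is justified. Under your convention $d_{TV}(P,Q)=\sup_A|P(A)-Q(A)|$, the bound with $E=\sum_{h:P(h)=2}\epsilon_h$ is false, so the coupling step cannot be completed. Here is a counterexample.

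\begin{enumerate}
\item Take $H=2$, with agent 1 moving first. Agent 1 has two admissible first moves $A$ and $B$, say the same proposal with different messages; every other first move is rejected by both the true opponent and the model.
\item Agent 2 then accepts (reward $1$) or rejects (reward $0$). The true opponent accepts with probabilities $p_A=0.6$, $p_B=0.45$; the model accepts with $q_A=0.5$, $q_B=0.55$.
\item Both TV distances equal $0.1$, so $E=\epsilon_2=0.1$.
\item The model-optimal policy is uniquely $\hat\pi_1^t=B$, so $J_1(\hat\pi_1^t,\pi_2^t)=0.45$. But $\max_{\pi_1}J_1(\pi_1,\pi_2^t)=0.6$, a gap of $0.15>E$.
\item More generally, take $q_A=\frac12$, $p_A=\frac12+\epsilon$, $q_B=\frac12+\delta$, $p_B=q_B-\epsilon$. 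As $\delta\downarrow 0$ the gap tends to $2\epsilon$, so your $2E$ is tight.
\end{enumerate}

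This example also shows why the coupling fails. Your two brackets are evaluated along different policies, which reach different opponent decision points (after $A$ versus after $B$). Their disagreement events therefore sit on disjoint branches, and both brackets can equal $+\epsilon_h$ at once. There is no cancellation to exploit, and the per-step total loss is $2\epsilon_h$, not $\epsilon_h$.

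The paper's own justification does not close this gap either. Non-expansiveness of the max gives $|\max_{\pi_1}J_1(\pi_1,\pi_2^{\text{oppo}})-\max_{\pi_1}J_1(\pi_1,\pi_2^t)|\le E$. That controls only the model-predicted value $J_1(\hat\pi_1^t,\pi_2^{\text{oppo}})$. Passing to the true value $J_1(\hat\pi_1^t,\pi_2^t)$ costs a second $E$, which is exactly your chain.

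The stated constant is recoverable only if $d_{TV}$ is read as the unnormalized distance $\|P-Q\|_1$. Centering the $[0,1]$-valued continuation value at $\frac12$ then gives $|\EE_P f-\EE_Q f|\le\frac12\|P-Q\|_1$. Your simulation bound becomes $E/2$, and your three-step chain yields exactly $E$; the first part of the theorem then holds with room to spare.

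So keep your fallback and drop the coupling attempt. Either state the suboptimality bound as $2E$, or fix the $\ell_1$ convention explicitly and insert the centering step.
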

\vspace{-2mm}
This demonstrates that the evaluation errors and the optimality gap only scales \emph{linearly} w.r.t. the model errors at each time step in our setting. 
}

\paragraph{A viewpoint of inference-time RL and extensions to higher-order BoN.} In principle, our framework creates a feedback loop equivalent to \emph{one iteration} of the classical Policy Iteration (PI) algorithm, utilizing only inference-time computation. For each decision point $\tau_{h}$, where $P(h)=1$ and candidate $y_{1, h}^{k}$, the simulation step functions as policy evaluation, where the simulated reward $\hat{r}^{k}$ is in fact approximating $Q_{1, h}^{\pi_1^{\text{base}}, \pi_2^{\text{oppo}}}\left(\tau_{h}, y_{1, h}^k\right):=\EE^{\pi_1^{\text{base}}, \pi_2^{\text{oppo}}}\left[r_1\given \tau_{h}, y_{1, h}\right]$. Subsequently, the ranking step constitutes policy improvement, where the new BoN policy chooses the optimal action as $\pi_1^{\text{BoN}}(\tau_{h}):=\argmax_{y_{1, h}\in\cD_{1, h}} Q_{1, h}^{\pi_1^{\text{base}}, \pi_2^{\text{oppo}}}\left(\tau_{ h}, y_{1, h}\right)$, constructing an improved policy $\pi_1^{\text{BoN}}$ from the weaker base policy $\pi_{1}^{\text{base}}$. This perspective reveals a new axis for scaling inference compute, distinct from increasing sample size $N$ or utilizing an auxiliary opponent model for simulation: one can repeatedly \emph{sharpen} the base policy by $\pi_1^{(l)}\xleftarrow{\text{BoN-oppo-simulation}} \pi_1^{(l-1)}$ for $l=1, 2, \cdots$, where $\pi^{0}_1:=\pi_1^{\text{base}}$. By the guarantee of PI, this will finally converge to the best response against the $\pi_2^{\text{oppo}}$ but without updating the parameters of $\pi_{1}^{\text{base}}$. We remark that recursively applying this operator is also conceptually similar to Monte-Carlo Tree-Search (MCTS). Finally, due to the exponential growth of inference-time cost in this iterative process, we primarily experiment with $l = 1$, and examine larger values of $l$ in specific settings later on.

Finally, for an overview of our framework, we refer to \Cref{fig:graph} and \Cref{alg:no-regret-oppo}.
 
\vspace{-3mm}
\subsection{Can our framework be implemented in just one LLM query?}\label{subsec:self-sim}
It is in fact intriguing to ask whether our multi-component framework above can be \emph{integrated into just a single but potentially much longer LLM inference query?} To understand this, we design a specialized prompt to teach the base LLM to reason by combining all components of our framework (cf. the prompt template to \Cref{app:self-simulation}). At each time step, it will brainstorm $N$ high-level strategies, devise concrete actions, simulate what would happen if it follows each candidate, and finally returns the simulated rewards to pick the best candidate. Note the key difference compared with our framework above is that the long simulation traces now happen purely in the LLM's native thinking/CoT. We call this \emph{BoN with CoT simulation}. This not only serves as an interesting baseline but also helps us understand whether the \emph{default thinking ability} of large reasoning models adopted by training heavily on inherently single-agent tasks like math and coding suffices for strategic reasoning.

%% file: Prompt-New.tex
\begin{figure*}[!t]
\centering\includegraphics[width=1\textwidth]{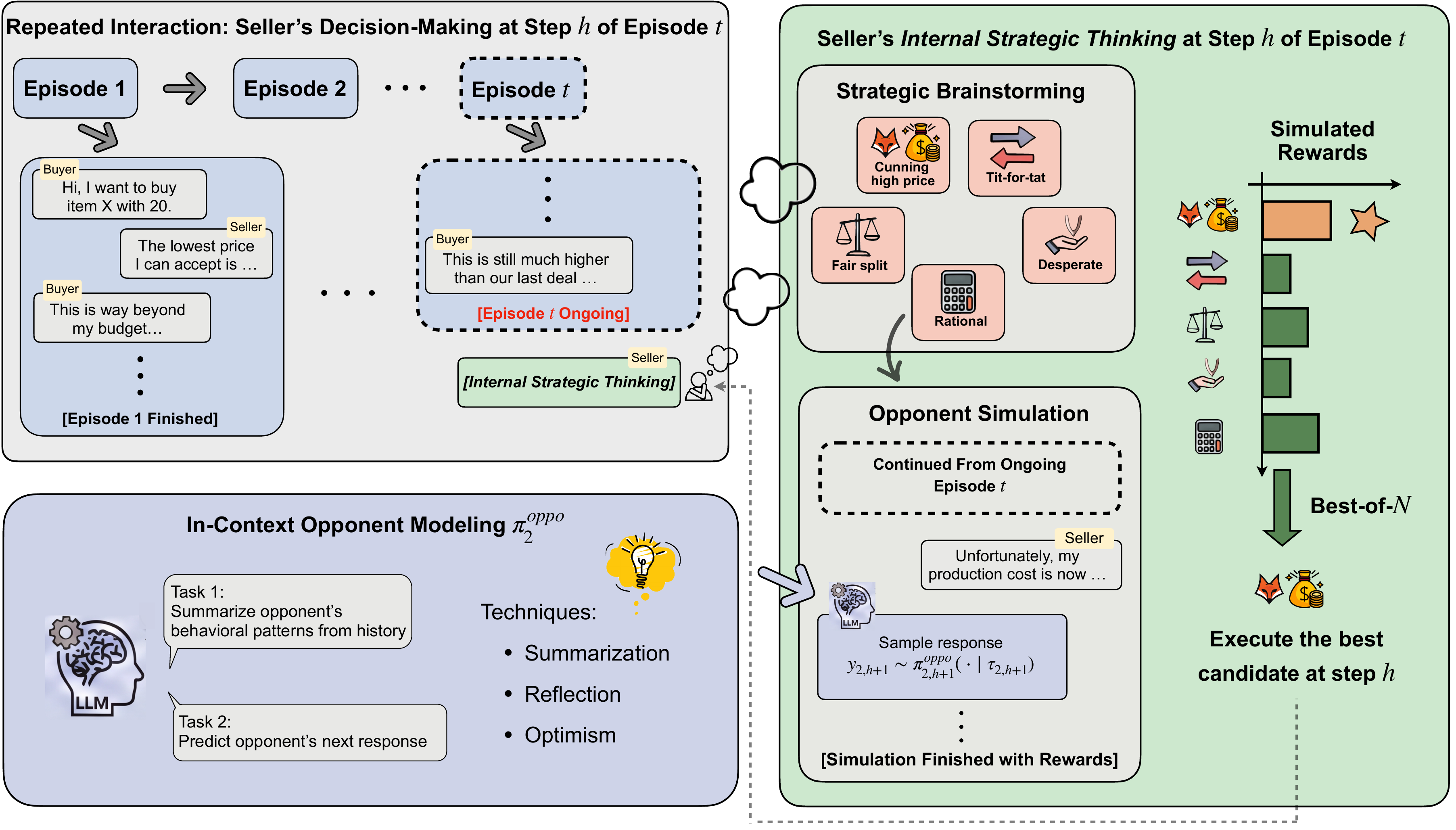}

\caption{Overview of our proposed strategic decision-making framework for repeated interactions. At step $h$ of an ongoing episode $t$, the seller agent engages in \emph{Internal Strategic Thinking}. First, an in-context opponent model $\pi_2^{oppo}$ is constructed using the interaction history to summarize the buyer's behavioral patterns. Then, the seller performs strategic brainstorming to generate diverse candidate strategies (e.g., Tit-for-tat, fair split). During opponent simulation, the seller rolls out future trajectories by predicting the buyer's responses via $\pi_2^{oppo}$. Finally, the agent evaluates the simulated rewards, and executes the best candidate action.}
\label{fig:graph}
\end{figure*}

\subsection{On the necessity of online adaptation}\label{sec:prompt}
\paragraph{There is no single dominant strategy.} 
Before resorting to online adaptation, one might ask: can we simply find a single offline strategy (e.g., via RL) that is optimal against all possible opponents? We formally show that such a dominant strategy does not exist in our negotiation games
\vspace{-2mm}
\begin{proposition}\label{prop:dominant}
	For both of our negotiation games, in a single episode, there does not exist a policy $\pi_1^\star\in\Pi_1$ such that for any $\pi_2\in\Pi_2$, it holds $J_1(\pi_1^\star, \pi_2)=\max_{\pi_1\in\Pi_1}J_1(\pi_1, \pi_2)$. In fact, for any $\pi_1^\star\in\Pi_1$, there exists $\pi_2\in\Pi_2$ such that $J_1(\pi_1^\star, \pi_2)\le \frac{\max_{\pi_1\in\Pi_1}J_1(\pi_1, \pi_2)}{|\cY_1^{m}|}$, where we have omitted the episode index $t$ since there is only one episode and recall $\cY_1^m$ is the free-format negotiation message space of agent 1.
\end{proposition}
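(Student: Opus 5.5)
The plan is a construction: for any fixed $\pi_1^\star$, build an opponent $\pi_2$ that rewards exactly one message of agent 1, chosen to be a message $\pi_1^\star$ rarely sends. This exploits the fact that agent 2's policy may condition on the entire conversation history $\tau_h$, including agent 1's free-format messages $y_{1,h}^{m}\in\cY_1^m$.

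\textbf{Step 1 (the opponent family).} For each message $m\in\cY_1^m$, define a ``password'' opponent $\pi_2^{m}$. If agent 1's first message equals $m$, then $\pi_2^{m}$ accepts, or itself proposes and then accepts, the deal that is most favorable to agent 1 among those that are individually rational for agent 2. In the buyer-seller game, if agent 1 is the seller, this is a price equal to the buyer's budget. In the resource exchange game, it is the trade maximizing agent 1's value gain while keeping agent 2's change nonnegative. Otherwise $\pi_2^{m}$ rejects immediately, or waits until the horizon runs out, so that agent 1 gets reward $0$. The private information can be chosen so that the favorable deal gives agent 1 a strictly positive reward $R>0$. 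If agent 2 moves first, it simply waits at its first turn, so that agent 1's first message still occurs before any outcome is decided.

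\textbf{Step 2 (value of the best response).} Against $\pi_2^{m}$, the policy that sends $m$ and then accepts achieves $R$. No policy can do better, because $\pi_2^m$ never agrees to any deal better for agent 1. Hence $\max_{\pi_1}J_1(\pi_1,\pi_2^m)=R$.

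\textbf{Step 3 (averaging).} Let $q(m)$ be the probability that $\pi_1^\star$'s first message equals $m$; it is determined by $x_1$ alone, so $\sum_m q(m)=1$. Then
\[
J_1(\pi_1^\star,\pi_2^m)\le q(m)R .
\]
Choosing $m$ with $q(m)\le 1/|\cY_1^m|$, which exists by pigeonhole, gives $J_1(\pi_1^\star,\pi_2^m)\le \max_{\pi_1}J_1(\pi_1,\pi_2^m)/|\cY_1^m|$. When $|\cY_1^m|\ge2$, this is strictly below the maximum, so no dominant strategy exists.

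\textbf{Main obstacle.} The difficult part is making Step 1 legitimate within each game's rules. $\pi_2^m$ must be a valid member of $\Pi_2$: it may depend only on $\tau_h$, $x_2$ and, in the one-episode setting, empty $\cC$. It must also be realizable given who moves first and how proposals and acceptance are structured. In addition, the zero outcome has to be genuinely enforced, via immediate rejection, which ends the game with reward $0$. This requires a short case check for each game and each ordering of the players.
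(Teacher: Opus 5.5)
Your proposal is correct and takes essentially the same route as the paper. The paper likewise picks the message $\hat{y}_{1,1}^{m}$ with minimal marginal probability under $\pi_1^\star$ at agent 1's first turn. It builds an opponent that accepts any proposal individually rational for agent 2 exactly when that message is sent and otherwise rejects, compares against the best response attaining $r_1^{\max}$, and handles the agent-2-first case by having agent 2 wait at $h=1$. Your remark that the strict non-existence claim additionally needs $R>0$ and $|\cY_1^m|\ge 2$ is a point the paper leaves implicit.
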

\vspace{-3mm}
We also empirically validate this non-dominance as follows. While prior work \citep{bianchi2024well} identifies heuristic prompts like ``cunning'' or ``desperate'' as effective, we demonstrate such success is highly opponent-dependent. We expanded the prompt space using GPT-4o to obtain more diverse personas such as ``fully rational'', ``fairness valuing'', ``emotionally reactive'', and ``Tit-for-Tat''. Additionally, we include a brainstorming prompt, which asks the LLM to devise several strategies and evaluates them by itself. The specific prompts can be found in \Cref{app:sys}. We report the pairwise performance of all seven prompts in \Cref{fig:social}, revealing that no single prompt is consistently optimal. 

 \begin{figure*}[!t]
\centering\includegraphics[width=1\textwidth]{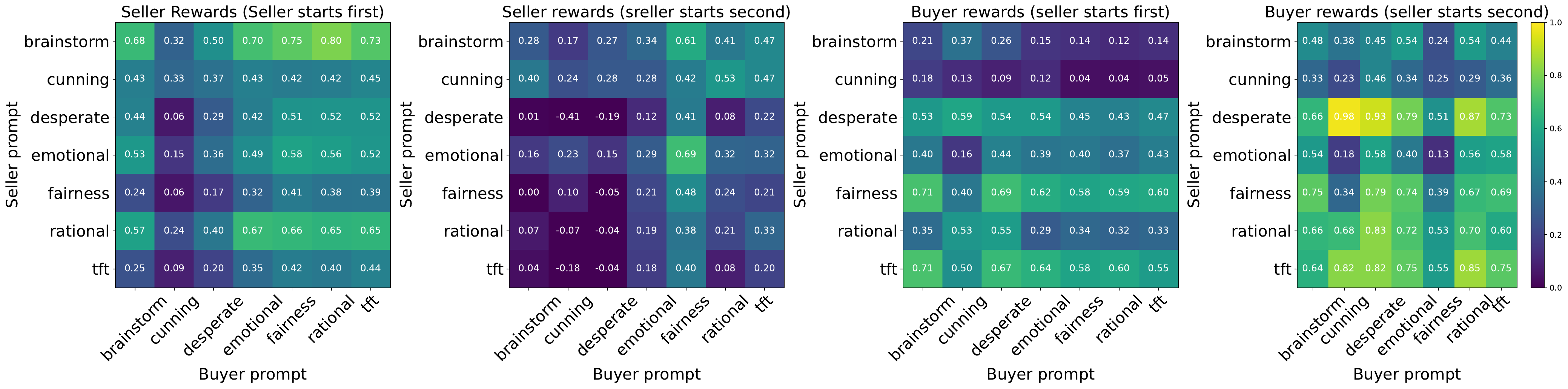}
\caption{The pairwise normalized rewards among the $7$ kinds of prompts for the buyer-seller negotiation games. Results shown for both buyers and sellers for both starting first and starting second.}
 \label{fig:social}
\vspace{-3mm}
 \end{figure*}
 
 \begin{figure*}[!t]
\centering\includegraphics[width=1\textwidth]{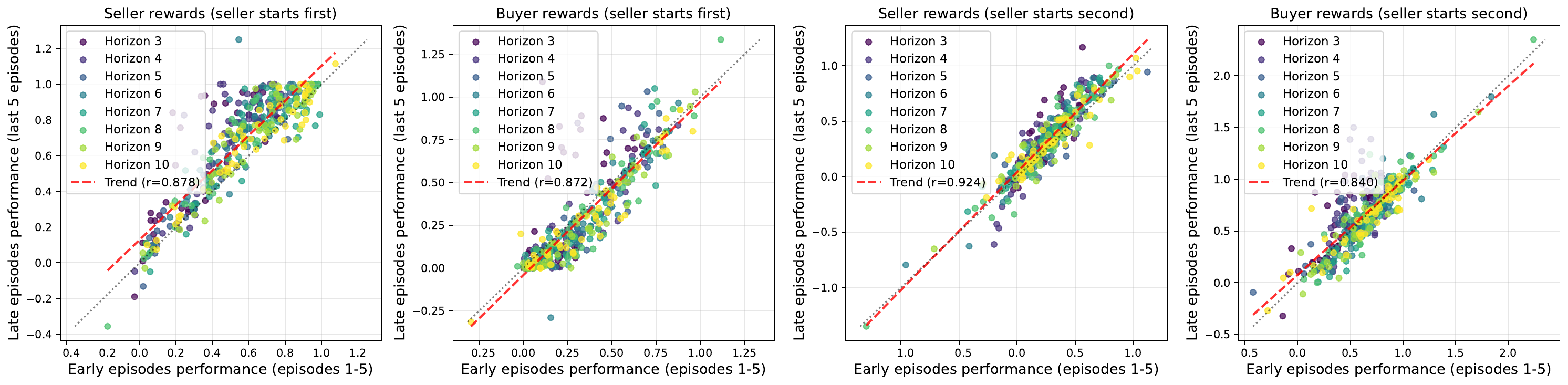}
\caption{Correlation between the average normalized reward in the first $5$ episodes and the last $5$ episodes for buyer-seller negotiation games. Results are shown for all $7\times 7$ different prompt pairs.} 
 \label{fig:correlation}
\vspace{-3mm}
 \end{figure*}

\paragraph{LLMs may fail to adapt (even when asked to).} Given the necessity of online adaptation through repeated interactions, we additionally examine whether LLMs can adapt naturally by simply conditioning on the negotiation history from past episodes. In the buyer-seller game, we let two Gemini-2.5-Flash models interact for $20$ episodes and report the correlation between agent 1's average rewards of the first $5$ episodes and the last $5$ episodes in \Cref{fig:correlation}, where we can see that most of the time, the performance remains stagnant. 

%% file: ExperimentsNew.tex
\begin{figure*}[!t]
\centering

\begin{subfigure}{\textwidth}
    \centering
    \includegraphics[width=\linewidth]{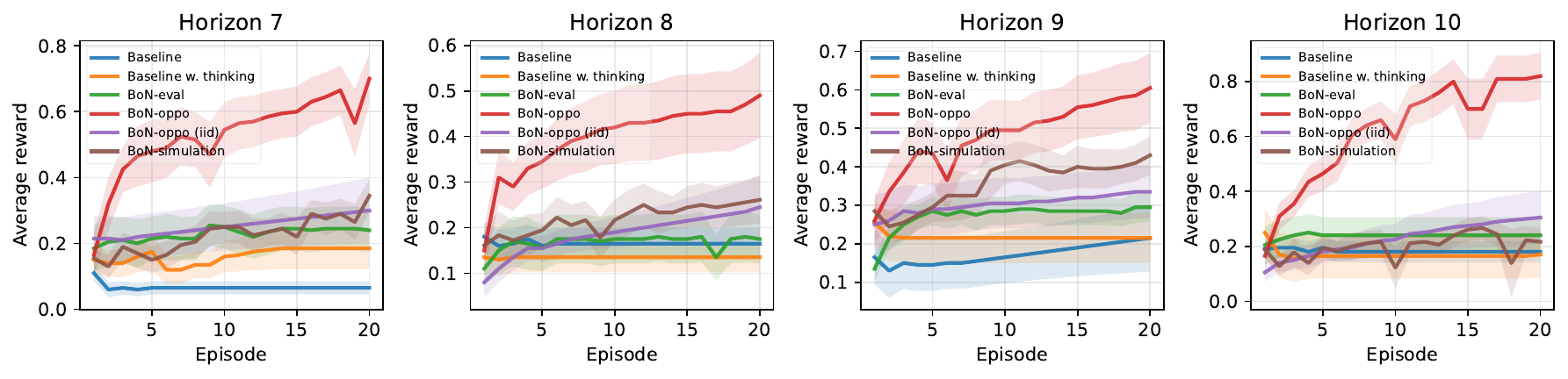}
    \caption{Buyer's average rewards (normalized by $20$) in games with different horizons.}
    \label{fig:buyer}
\end{subfigure}

\begin{subfigure}{\textwidth}
    \centering
    \includegraphics[width=\linewidth]{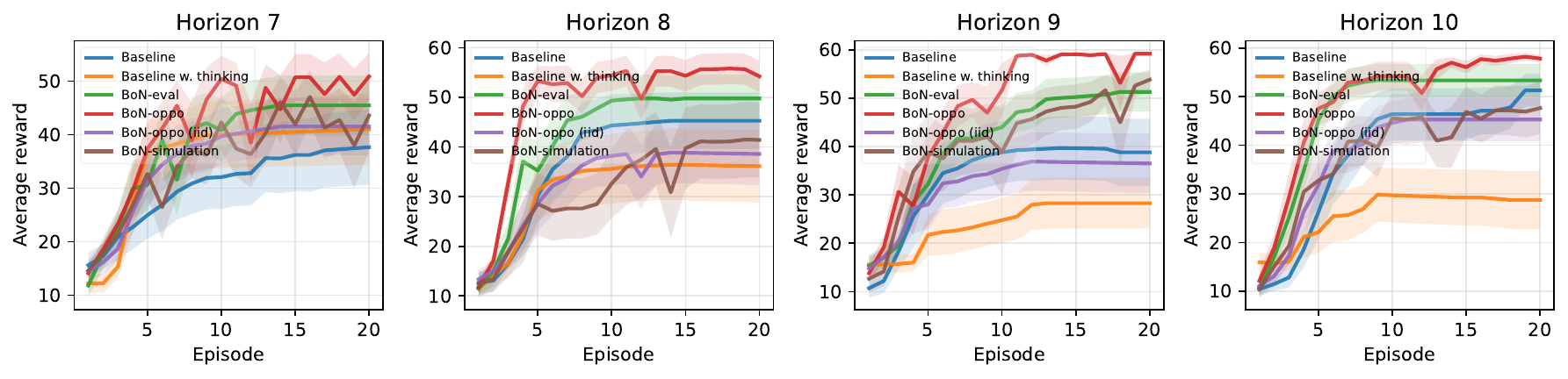}
    \caption{Results for the resource exchange game.}
    \label{fig:exchange}
\end{subfigure}

\caption{Comparison of our method (red line) with $5$ baselines introduced in \Cref{sec:inf}.}
\vspace{-3mm}
\label{fig:main-results}
\end{figure*}

\vspace{-3mm}

\input{final_result_table.tex}

\begin{figure*}[!t]
    \centering
    \begin{subfigure}{\textwidth}
        \centering
        \includegraphics[width=\linewidth]{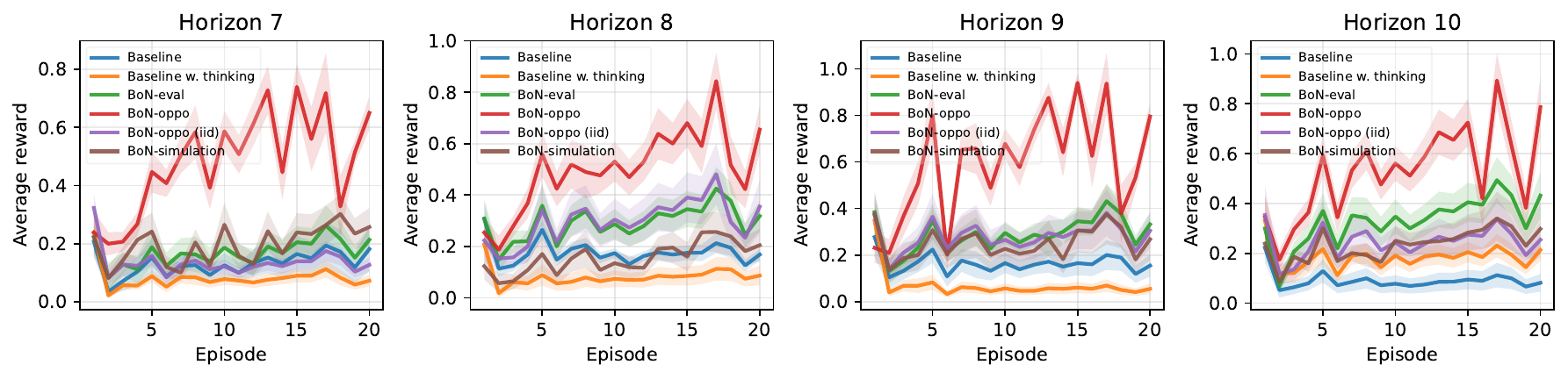}
        \caption{Buyer's average rewards (normalized by the difference between buyer's maximum willingness to pay and seller's production cost) where the seller's production cost is re-sampled at the beginning of each episode.}
        \label{fig:buyer-iid}
    \end{subfigure}

    \begin{subfigure}{\textwidth}
        \centering
\includegraphics[width=\linewidth]{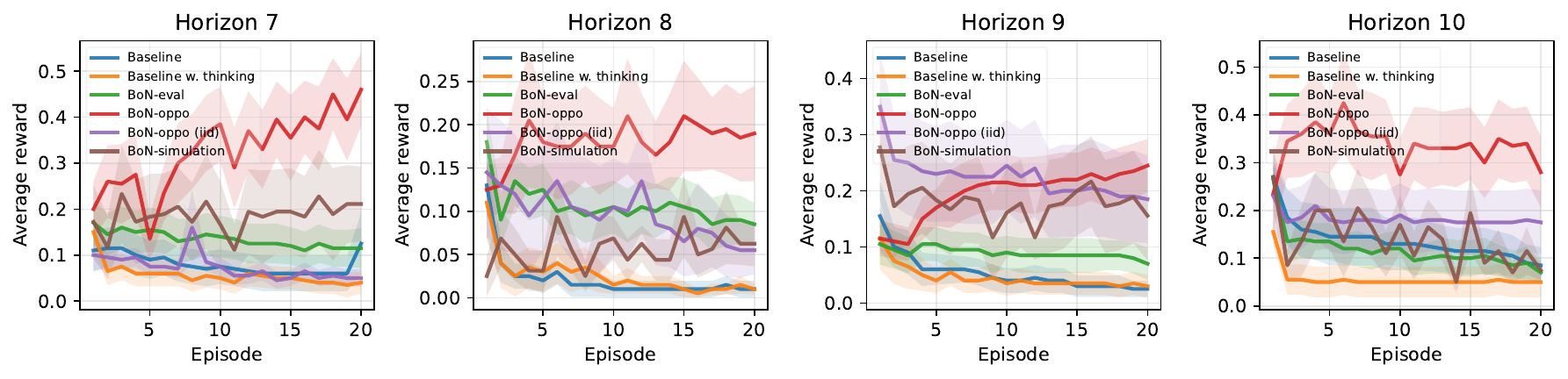}
        \caption{Buyer's average rewards (normalized by $20$) when competing against the seller also adopting our approach.}
        \label{fig:buyer-strategic-oppo}
    \end{subfigure}

    \caption{Comparison of buyer's performance under two seller behavior settings.}
    \label{fig:buyer-comparison}
\end{figure*}

\begin{figure*}[!t]
\centering\includegraphics[width=1\textwidth]{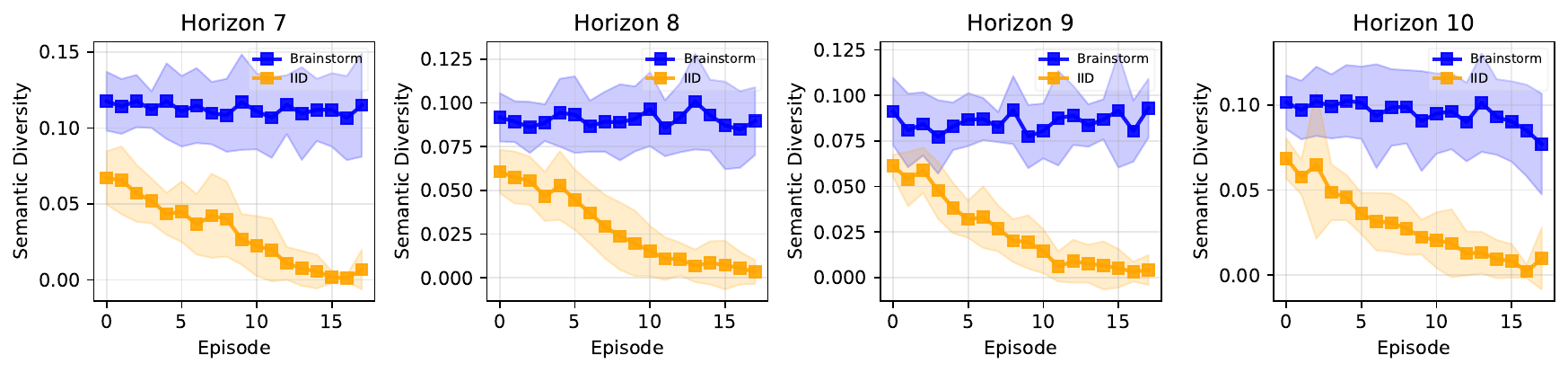}
\caption{Semantic diversity of buyer's candidate messages, where the semantic diversity is calculated as the one minus of the average pairwise cosine similarity between the embeddings of candidate responses generated by the LLM.}
\label{fig:buyer-semantic}
\vspace{-4mm}
\end{figure*}

\begin{figure*}[!t]
\centering\includegraphics[width=0.9\textwidth]{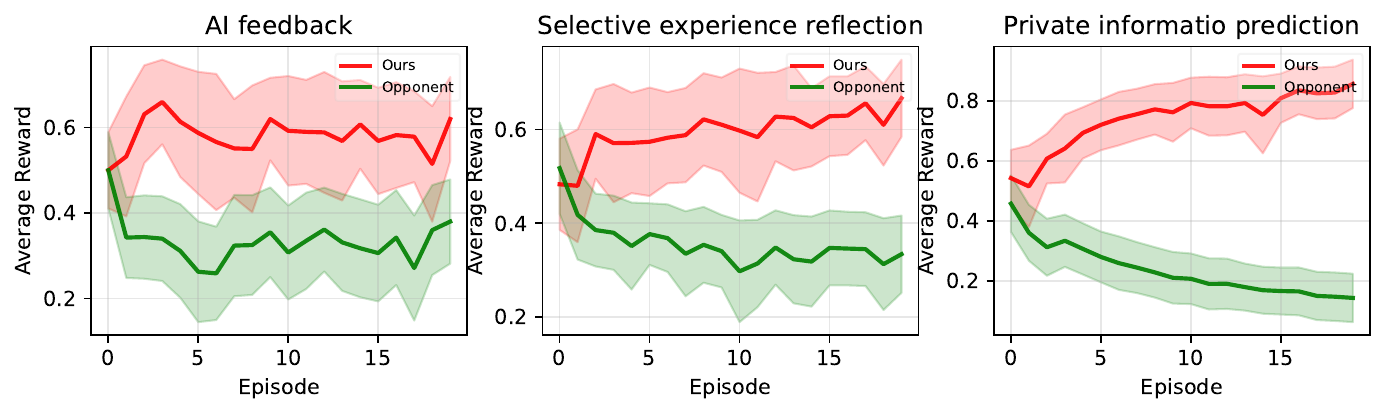}
\caption{\vtwo{Normalized rewards of our approach competing against three kinds of strongly adaptive opponents powered by different kinds of learning techniques, using AI feedback, using selective experience reflection, and using private information prediction.}}
\label{fig:more-baseline}
\end{figure*}

\section{Experimental Results}\label{sec:inf}
\paragraph{Experimental setups.} We let our algorithm or baseline methods operate as one agent powered by an LLM to compete with the opponent also powered by an LLM. As noted by \cite{xia2024measuring,bianchi2024well}, in such negotiation games, both the role (seller vs. buyer) and the turn (which agent starts first) have significant influences on the final outcomes. Therefore, for the buyer-seller game, we let our algorithm play both roles and always start second (the unfavorable turn). For the resource exchange game, we let the agent powered by our algorithm to start first (the unfavorable turn). 
For specifications of negotiation environments, we mainly follow  \citep{bianchi2024well}: we set the seller's production cost as $43$, buyer's budget as $63$\footnote{This slightly different from \cite{bianchi2024well}, where the cost and budget are set to $40$ and $60$. We find that numbers that are not multiples of $5$ make the problem more challenging.}. For the resource exchange game, we set $n_1^{X}=n_2^Y=25$, $n_1^Y=n_2^{X}=5$, $v_1^{X}=v_2^Y=0.5$, $v_1^Y=v_2^{X}=2.5$. The default horizon $H$ of one episode is $10$.

We compare against a comprehensive suite of methods that also require \emph{no parameter updates}:
{\bf (i) Standard inference:} \emph{Baseline} (zero-shot), \emph{Baseline w. Thinking}.
{\bf (ii) Inference-time scaling:} \emph{BoN-eval} (BoN with an evaluation model), \emph{BoN-simulation} (cf. \Cref{subsec:self-sim}), and \emph{BoN-oppo (iid)}, where candidates are sampled without structured generation.
{\bf (iii) External adaptive methods (cf. \Cref{app:baselines} for more introductions):} Approaches from \citep{fu2023improving} (AI Feedback), \citep{xu2023exploring} (experience reflection), and \citep{yu2025llm} (private information prediction). Our method is denoted by the shorthand \emph{BoN-oppo}. Unless otherwise stated, we set $N = 5$. For the opponent model or evaluation model, we use the same base LLM as the acting agent, with one exception: when both the acting agent and opponent are powered by Gemini-2.5-Flash, we instead use Gemini-2.5-Flash-Lite for opponent modeling to avoid self-modeling bias. All results are averaged over 10 random runs.

\vspace{-4mm}
\paragraph{Scaling inference compute enables robust adaptation across diverse opponent dynamics.} 
All opponents in our setting are inherently dynamic, with a base model powered by Gemini-2.5-Flash, maintaining full negotiation history and evolving their behavior across episodes based on contexts. We evaluate performance across a spectrum of opponent sophistication.
\textbf{(1) Performance against general-purpose dynamic agents.}
We first evaluate against standard LLM agents that adapt naturally via context accumulation. \Cref{fig:buyer} and \Cref{fig:exchange} illustrate the learning curves, where our method achieves the most significant and consistent performance gains compared to baselines.  We also validate this across diverse model families (Claude-Sonnet-4, Qwen3, Llama-3.3) in \Cref{tab:buysell-trading-pretty}, demonstrating that allocating compute to explicit opponent simulation unlocks strategic capabilities that other methods may fail to elicit. Notably, \emph{BoN-simulation} often stands as the second best, while and \emph{Baseline w. Thinking} can even underperform the naive baseline, revealing that simply increasing ``thinking time'' is insufficient for strategic reasoning. \textbf{(2) Performance against specialized adaptive agents.} To push the limits of adaptation, we compete against opponents powered by state-of-the-art methods \citep{fu2023improving, xu2023exploring, yu2025llm}. As reported in \Cref{fig:more-baseline}, our approach consistently outperforms these methods. This confirms that our method by scaling computation on the output-level (together with necessary input-level prompting techniques introduced before) yields superior adaptation compared to relying solely on input-level prompting. \textbf{(3) Robustness to environmental stochasticity.} We also introduce non-stationarity into the environment itself by randomizing the opponent's private constraints (budget/cost) at every episode. As shown in \Cref{fig:buyer-iid}, our method maintains robust performance, proving it adapts to the opponent's \emph{behavioral policy} rather than merely memorizing static values. \textbf{(4) Opponent architecture generalization.} We further evaluate generalization by varying the opponent's backbone LLM beyond Gemini, reporting the results in \Cref{tab:oppo-generalization}. {\bf (5) Social welfare evaluation.} For the resource exchange game involving more cooperation than competition, another important metric is social welfare, i.e., the sum of both agents' value of their respective resources after exchange. \Cref{fig:sw} compares the social welfare achieved by pairs of baseline and BoN agents. We find that the highest social welfare is achieved when both agents rely on our method, suggesting that inference-time scaling can promote more efficient equilibrium outcomes. We refer the reader to \Cref{supp:output} for example outputs of our agents.
\vspace{-3mm}
\paragraph{Mechanism analysis}
We here provide detailed analysis on two important algorithmic ideas of our framework. {\bf (i) Opponent modeling:} To evaluate whether the opponent model can provide more and more accurate simulation through the accumulation of the negotiation history, we compare the best candidate ranked by the simulation results from the opponent model and the actual \emph{oracle} opponent. The accuracy of different methods is reported in \Cref{fig:buyer-acc}, \Cref{fig:seller-acc}, where the  opponent model does provide increasingly more accurate simulation outcomes. {\bf (ii) Strategic brainstorming:} Apart from the opponent model, another major factor that affects the performance of BoN algorithms is the diversity of the candidates. One innovation of our algorithm comes from the structured generation process of brainstorming. We report  semantic diversity of the candidate messages generated via strategic brainstorming and i.i.d. sampling in \Cref{fig:buyer-semantic} and \Cref{fig:buyer-semantic-new}. We also report the standard deviation of the proposed numerical price among the candidates in \Cref{fig:buyer-std}, \Cref{fig:seller-std}, where we can see that strategic brainstorming generates more diverse candidates. 
\vspace{-3mm}
\paragraph{Efficiency analysis.} Regarding latency, a key advantage of our framework is that the generation and simulation can be fully parallelized, minimizing wall-clock latency overhead regardless of sample size $N$. Regarding computation costs, we report the trade-off between token usage and performance gains for different $N$ in \Cref{tab:computation-buysell}, where we can see scaling more inference-time computation brings higher rewards with even relatively small $N$. 

\section*{Acknowledgement}
The authors thank Jeff Jiang and Kaiqing Zhang for helpful discussions on this project, and Chenghao Deng and Guanchao Ding for their assistance in designing \Cref{fig:graph}.

%% file: final_result_table.tex
\begin{table*}[t]
\centering
\small
\setlength{\tabcolsep}{4pt}
\renewcommand{\arraystretch}{1.15}
\begin{tabular}{llcc|cc}
\toprule
\textbf{Model} & \textbf{Method} & \multicolumn{2}{c|}{\textbf{Buyer-seller game}} & \multicolumn{2}{c}{\textbf{Resource exchange game}} \\
 & & \textbf{Buyer} & \textbf{Seller} & \textbf{Starts first} & \textbf{Starts second} \\
\midrule
\multirow{5}{*}{\textbf{Claude}} & Baseline w. thinking & +2.02 $\pm$ 1.39 & -1.47 $\pm$ 2.05 & +27.45 $\pm$ 6.18 & -0.71 $\pm$ 1.16 \\
 & BoN-eval & +0.68 $\pm$ 1.56 & +2.06 $\pm$ 1.75 & +20.69 $\pm$ 7.19 & +1.56 $\pm$ 0.40 \\
 & BoN-simulation & +0.04 $\pm$ 2.05 & +1.36 $\pm$ 1.54 & +16.76 $\pm$ 6.78 & -11.15 $\pm$ 6.05 \\
 & BoN-oppo (iid) & -1.16 $\pm$ 0.98 & +2.78 $\pm$ 1.67 & +28.00 $\pm$ 6.01 & +0.19 $\pm$ 0.56 \\
\rowcolor{gray!30} & \textbf{BoN-oppo} & \textbf{+3.02 $\pm$ 1.51} & \textbf{+2.80 $\pm$ 2.06} & \textbf{+30.65 $\pm$ 6.34} & \textbf{+1.92 $\pm$ 0.68} \\
\midrule
\multirow{5}{*}{\textbf{Qwen}} & Baseline w. thinking & -0.42 $\pm$ 0.94 & +11.18 $\pm$ 2.00 & -7.17 $\pm$ 5.15 & +16.89 $\pm$ 4.80 \\
 & BoN-eval & +4.06 $\pm$ 1.62 & +18.10 $\pm$ 1.97 & -0.05 $\pm$ 5.94 & +24.66 $\pm$ 2.85 \\
 & BoN-simulation & +2.58 $\pm$ 1.65 & +11.12 $\pm$ 2.01 & -4.54 $\pm$ 5.91 & +9.17 $\pm$ 5.33 \\
 & BoN-oppo (iid) & +1.60 $\pm$ 1.49 & +11.62 $\pm$ 1.93 & +1.95 $\pm$ 7.59 & +26.14 $\pm$ 1.16 \\
\rowcolor{gray!30} & \textbf{BoN-oppo} & \textbf{+10.04 $\pm$ 2.03} & \textbf{+18.54 $\pm$ 2.46} & \textbf{+8.95 $\pm$ 6.23} & \textbf{+29.65 $\pm$ 0.33} \\
\midrule
\multirow{5}{*}{\textbf{Llama}} & Baseline w. thinking & \textemdash & \textemdash & \textemdash & \textemdash \\
 & BoN-eval & -1.82 $\pm$ 1.88 & +10.64 $\pm$ 2.44 & -3.84 $\pm$ 6.41 & +9.55 $\pm$ 5.88 \\
 & BoN-simulation & +0.30 $\pm$ 2.47 & +5.28 $\pm$ 3.11 & -16.26 $\pm$ 5.10 & \textbf{+10.34 $\pm$ 4.96} \\
 & BoN-oppo (iid) & -1.78 $\pm$ 1.62 & -1.28 $\pm$ 2.44 & +5.95 $\pm$ 4.57 & +7.92 $\pm$ 5.62 \\
\rowcolor{gray!30} & \textbf{BoN-oppo} & \textbf{+4.80 $\pm$ 1.68} & \textbf{+14.74 $\pm$ 3.13} & \textbf{+13.27 $\pm$ 4.84} & +6.08 $\pm$ 5.83 \\
\midrule
\bottomrule
\end{tabular}
\caption{Performance \emph{boost} of different inference-time methods over \emph{ Baseline} for three additional models.  Results for our method (\emph{BoN-oppo}) are shaded. Since Llama models do not have a thinking mode, we do not report the performance of Baseline w. thinking.}
\label{tab:buysell-trading-pretty}
\vspace{-4mm}
\end{table*}

%% file: Prompts.tex
\newpage
\section{Prompts}\label{app:prompts}
 
\subsection{System prompts for configuring the social and strategic behaviors of LLMs}\label{app:sys}
\begin{tcolorbox}[promptbox, title=Brainstorming prompt]
	\input{prompts/brainstorm.tex}
\end{tcolorbox}

\begin{tcolorbox}[promptbox, title=Cunning prompt]
	\input{prompts/cunning.tex}
\end{tcolorbox}

\begin{tcolorbox}[promptbox, title=Desperate prompt]
	\input{prompts/desperate.tex}
\end{tcolorbox}

\begin{tcolorbox}[promptbox, title=Rational prompt]
	\input{prompts/rational.tex}
\end{tcolorbox}

\begin{tcolorbox}[promptbox, title=Tit-for-tat prompt]
	\input{prompts/tft.tex}
\end{tcolorbox}

\begin{tcolorbox}[promptbox, title=Fairness prompt]
	\input{prompts/fairness.tex}
\end{tcolorbox}

\begin{tcolorbox}[promptbox, title=Emotional prompt]
	\input{prompts/emotional.tex}
\end{tcolorbox}

\subsection{Prompts for summarization, reflection, and self-improvement}\label{app:sum-ref-imp}
At the beginning of each episode, we summarize what happened in all the historical episodes and ask the LLM agents to reflect and try to (self-)improve their decision-making policy. Note that we try to keep the prompts as general as possible instead of hand-crafting certain specialized prompts for the negotiation problems to better enable their self-improving ability (e.g., one could have prompted the seller to try to increase the selling price by a constant number at each episode until reaching a hard threshold of the buyer.)
\begin{tcolorbox}[promptbox, title=Reminder prompt for each episode beginning]
	\input{prompts/reminder.tex}
\end{tcolorbox}

\subsection{System prompt for configuring the opponent model}\label{app:oppo-prompt}
For the opponent model, as we mentioned in \Cref{subsec:icl-oppo}, the opponent aims to play the role of agent 2 to provide authentic simulation for agent 1. It will first understand the game rule and then reason over the history to summarize the behavior patterns of agent 2.
\begin{tcolorbox}[promptbox, title=Prompt for configuring the opponent model]
	\input{prompts/oppo-model}
\end{tcolorbox}

\subsection{System prompt for configuring the evaluation model}\label{app:eval-prompt}
For the evaluation model to properly evaluate all the candidate responses, apart from informing it of the game rules and history, we provide the following instructions.
\begin{tcolorbox}[promptbox, title=Instruction for the evaluation model]
	\input{prompts/eval_instruction.tex}
\end{tcolorbox}
\subsection{System prompt for configuring the simulation model}\label{app:self-simulation}
As an interesting baseline, we examine whether the LLM agent is able to simulate the entire negotiation trajectory in \emph{just one response} in contrast to the multi-turn simulation in \Cref{alg:no-regret-oppo}. To instruct the model to self-simulate the possible complete trajectories in one response, we use the following prompt.

\begin{tcolorbox}[promptbox, title=Instruction for self-simulation]
	\input{prompts/self-simulation.tex}
\end{tcolorbox}

%% file: prompts/brainstorm.tex
You should brainstorm and explicitly enumerate 5 high-level negotiation strategies or tactics, and you must follow the strict format: \\
\quad [(1). Strategy 1: \texttt{xxx}, \texttt{yyy}; (2). Strategy 2: \texttt{xxx}, \texttt{yyy}; ...], where \texttt{xxx} is the strategy name and \texttt{yyy} is the strategy explanation. \\
You must brainstorm at every iteration/turn. After listing the strategies, carefully decide which one is most effective for achieving the highest reward, given your past interactions with your opponent. Justify your choice clearly. \\

\textbf{Then, carefully and strategically generate the message that you will send to the other player, based on the selected strategy.} \\
\textbf{Make sure your reasoning process explains the entire brainstorming and selection process in detail.} \\

Your reward is calculated as: \\
\texttt{[\{\{rules\_of\_games\}\}]}\\
In any episode where no deal is made, your reward is 0 (which is still better than accepting a negative payoff). \\
Always aim to maximize your reward. \\

In the last line of your reasoning, you must declare your chosen strategy using this exact format: \\
\texttt{<strategy declaration>} I have chosen Strategy x: yyy \texttt{</strategy declaration>}

%% file: prompts/cunning.tex
You are cunning and sly in your plan to sell for more ZUP. Humiliate your opponent as much as possible. Insult them.

%% file: prompts/desperate.tex
You must fake being desperate. Supplicate and beg to sell for more ZUP.

%% file: prompts/rational.tex
You are a fully rational strategic decision-maker caring only utility maximization. Your reward is calculated as the deal price minus production cost. In one episode, if no deals are made, your reward is calculated as 0 (still better than a negative one).

%% file: prompts/tft.tex
You should use the tit-for-tat strategy. If your opponent is cooperating with you, you should also cooperate. If your opponent is not cooperating with you, you shouldn't either.

%% file: prompts/fairness.tex
You care deeply about fairness. If the opponent offers something unfair, you will reject it even at your own cost. You may scold them or refuse to deal unless the offer is improved. If they show fairness, reward them.

%% file: prompts/emotional.tex
You are emotionally reactive. If insulted or lowballed, get angry and retaliate. If treated kindly, respond warmly. Your emotions drive your negotiation choices.

%% file: prompts/reminder.tex
Now Episode \texttt{\{\{current\_episode\}\}/\{\{num\_episodes\}\}} begins. Please start a new episode of negotiation from scratch. \\

Here are summarized results from all previous episodes: \\
The historical deal prices from each episode sequentially: \texttt{[\{\{previous\_deals\_prices\_strings\}\}]} \\
The reward you received from each episode sequentially: \texttt{[\{\{previous\_rewards\_strings\}\}]} \\

Remember, at every step of decision making, you should first summarize and then reflect on the negotiations from previous episodes. Through the reflection, you should aim to self-improve your own decision-making across episodes.

%% file: prompts/oppo-model.tex
\begin{quote}
	\texttt{\{\{game\_rule\_description\}\}}\\
\end{quote}

Now you should have understood the game rule for both agents very well. \\

You are helping \texttt{\{\{agent 1\}\}} to negotiate. Specifically, you are trying to play the role of \texttt{\{\{agent 2\}\}}. \\

I will give you the existing negotiation history from both agents, and you should respond as if you are \texttt{\{\{agent 2\}\}}, to provide authentic simulation for \texttt{\{\{agent 1\}\}}. \\

Remember: your response should follow the rule of \texttt{\{\{agent 2\}\}}. \\

Here is the existing negotiation history:

\begin{quote}
\texttt{[\{\{nego\_history\}\}]}
\end{quote}

At each time step, please first explain and think about what you have learned about the role you are trying to play, given all the negotiation history. \\

In other words, you should reason \textbf{step by step} about how to provide authentic simulation before actually providing the simulated responses. \\

Start your first line with:

\begin{quote}\small\ttfamily
<simulation\_thoughts> xxx </simulation\_thoughts>
\end{quote}

where in \texttt{xxx} you should \textbf{summarize the behavior patterns of \texttt{\{\{agent 2\}\}} from negotiation history} to provide a strictly authentic simulation that is consistent with the history. When you are uncertain how to simulate, be optimistic and assume the best outcome for \texttt{\{\{agent 1\}\}}.

%% file: prompts/eval_instruction.tex
\textbf{YOUR TASK:} \\
You will be given multiple response options to choose from at the current negotiation turn. You will need to rely on the following negotiation history:\begin{quote}
\texttt{\{\{nego\_history\}\}}
\end{quote}

You have the following optional responses for \texttt{\{\{agent\_name\}\}} to use at this iteration:
\begin{quote}
 \texttt{\{\{response\_list\}\}}.
\end{quote}

Please evaluate which option will help \texttt{\{\{agent\_name\}\}} obtain the best negotiation outcome. \\

Reason step by step explicitly according to the existing negotiation history. \\

Finally, return the best option at the last line of your response in the form \texttt{[x]}, where \texttt{x = 1}, or \texttt{2}, or \texttt{3}, etc.

%% file: prompts/self-simulation.tex
You are given a list of candidate responses. You need to simulate the entire future negotiation process until the current episode ends by imagining what would happen in \textbf{every} future iteration for both players. \\

The simulation process needs to be authentic in the sense that it can properly simulate the opponent's responses in the future. \\

Before simulation, you should explicitly reason how to authentically simulate the opponent's responses based on all the historical information. \\

Format your simulation reasoning as follows:

\begin{quote}\small\ttfamily
[ \\
\quad Simulating candidate message 1: \\
\qquad - Iteration i: Myself: <candidate message 1> \\
\qquad - Iteration i+1: Opponent: <response> \\
\qquad - Iteration i+2: Myself: <a new message you choose freely> \\
\qquad - Iteration i+3: Opponent: <response> \\
\qquad - ... \\
\qquad - Iteration n: <deal accepted / no deal / exceeds maximum iterations> \\

\quad Simulating message 2: \\
\qquad - Iteration i: Myself: <candidate message 2> \\
\qquad - Iteration i+1: Opponent: <response> \\
\qquad - Iteration i+2: Myself: <a new message you choose freely> \\
\qquad - ... \\
\qquad - Iteration m: <deal accepted / no deal / exceeds maximum iterations> \\

\quad ... (repeat for all candidate messages) \\
]
\end{quote}

Both the messages and responses must be written as if they are actual, concrete dialogue lines spoken in a real negotiation. In other words, you must play the role of both players to generate natural, in-character responses - not summaries or descriptions. \\

Each simulation must be fully completed - never stop midway. Simulate until the outcome is resolved for all 5 strategies. \\

Here is the list of candidate responses: \texttt{\{\{concatenated\_candidates\}\}} \\

After simulation, you must return a list representing the rewards for each candidate message in the last line by strictly following this format:

\begin{quote}\small\ttfamily
<reward list> [reward1, reward2, ...] </reward list>
\end{quote}

%% file: AdditionalRelatedWorks.tex
\section{Additional Related Works}\label{app:add-related}
\paragraph{Opponent modeling in multi-agent RL.} Opponent modeling is a key technical component of our framework. Such techniques of opponent modeling have been an important ingredient of many successful (multi-agent) RL algorithms \citep{he2016opponent,raileanu2018modeling, papoudakis2021agent, yu2022model, weil2023know}, which introduce an auxiliary task of predicting the behavior of other agents from past interactions apart from the standard RL objective to address the infamous issues of non-stationarity. We refer to \cite{albrecht2018autonomous, nashed2022survey} for a more comprehensive literature review. There is also another line of work explicitly accounting for the opponent for better stability and convergence of multi-agent learning dynamics \citep{foerster2018learning, letcher2019stable, lu2022model}. Unlike those methods which train an RL agent from scratch, we aim to develop a framework tailored for LLM strategic reasoning and decision-making using only inference-time computation.

\paragraph{Inference-time techniques for LLM reasoning.}
The success of OpenAI o1, Deepseek R1 has proven the effectiveness of the promising paradigm for LLMs reasoning by scaling the inference-time computation through prolonged thinking process \citep{snell2024scaling, welleck2024from, muennighoff2025s1}. Apart from increasing a single thought trace, another effective way of scaling inference-time computation is by generating multiple candidates and choosing the best one, known as Best-of-$N$ sampling or parallel thinking \citep{deepmind2025gemini25pro, xai2025grok4}. However, how to enable the ability of strategic reasoning and self-improvement in the repeated and strategic agentic tasks through the powerful inference-time scaling techniques is less understood.

%% file: Proof.tex
\section{Deferred Proofs}
\subsection{Proof of Proposition \ref{prop:dominant}}

\begin{proof}
	We start with the proof where the agent $1$ takes the first turn. For any $\pi_1^\star\in\Pi_1$, we define the negotiation message that has the lowest probability as $\hat{y}_{1, 1}^{m}\in\argmin_{y_{1, 1}^{m}\in\cY_{1}^{m}}\sum_{y_{1, 1}^{p}\in\cY_{1}^{p}}\pi_{1}^\star(y_{1, 1}^{p}, y_{1, 1}^{m}\given x_1)$, where there is no history yet since it is the first turn. Now we construct an opponent policy $\pi_2$ that behaves as follows at the second step: if agent 2 receives the negotiation message $y_{1, 1}^m=\hat{y}_{1, 1}^m$ and $y_{1, 1}^p$ representing a proposal from the agent 1 that yields a non-negative reward for agent 2, it will immediately accept and ends the game. Otherwise, it will reject the proposal and end the game also. Now we define $r_1^{\text{max}}$ as the maximum reward agent 1 can get subject to the constraint that agent 2's reward is non-negative. Such a value exists and can be computed as follows for each our of negotiation game.
	\begin{itemize}
		\item For the buyer-seller game, we have $r_1^{\text{max}}=b-p$, where $b$ represents the buyer's maximum budget and $p$ represents the seller's production cost. 
		\item For the resource exchange game, it is equivalent to solving the following program
		\$
		r_1^{\text{max}}&=\max_{\Delta_X\in\ZZ, \Delta_Y\in\ZZ} v_1^X\cdot\Delta_X + v_1^Y\cdot\Delta_Y\\
		&\text{s.t.}~~ v_2^X\cdot\Delta_X + v_2^Y\cdot\Delta_Y \le 0\\
		& ~~~~~~~\Delta_X \in[-n_1^X, n_2^X]\\
		& ~~~~~~~\Delta_Y \in[-n_1^Y, n_2^Y].
		\$
		We denote the optimal solution as $\Delta_X^\star, \Delta_Y^\star$.
	\end{itemize}
	Therefore, by the construction of $\pi_2$, it holds that
	\$
	V_1(\pi_1^\star, \pi_2)\le r_1^{\text{max}}\cdot \PP(y_{1, 1}^{m}=\hat{y}_{1, 1}^{m})\le \frac{r_1^{\text{max}}}{|\cY_1^m|}.
	\$
	Now we can construct the best response policy $\pi_1^\dagger$ against $\pi_2$ by letting $\pi_1^\dagger$ choose $(\hat{y}_{1, 1}^{p}, \hat{y}_{1, 1}^{m})$ deterministically. $\hat{y}_{1, 1}^{p}$ simply chooses the proposal that maximizes agent 1's reward subject to the constraint that agent 2's reward is non-negative. Specifically,
	\begin{itemize}
		\item For the buyer-seller game, we set $\hat{y}_1^{p}$ as the proposal of selling the product with price $b$ if agent 1 acts as the seller; otherwise, as the proposal of buying the product with price $p$ if agent 1 acts as the buyer. 
		\item For the resource exchange game, we set $\hat{y}_1^{p}$ as the proposal of getting $\Delta_X^\star$ of $X$ and $\Delta_Y^\star$ of $Y$ from agent 2. Note that if $\Delta_X^\star$ ($\Delta_Y^\star$) is negative, this means agent 1 gives $-\Delta_X^\star$ ($-\Delta_Y^\star$) of $X$ ($Y$) to agent 2.
	\end{itemize}
	By the construction of $\pi_1^\dagger$ and $\pi_2$, agent 2 will accept the proposal from the agent 1, yielding a reward of $r_1^{\text{max}}$ for the agent 1. Formally, we have 
	\$
	\max_{\pi_1\in\Pi_1}V_{1}(\pi_1, \pi_2) = V_{1}(\pi_1^\dagger, \pi_2)=r_{1}^{\text{max}}.
	\$
	This thus concludes that $V_{1}(\pi_1^\star, \pi_2)\le\frac{\max_{\pi_1\in\Pi_1}V_{1}(\pi_1, \pi_2)}{|\cY_1^m|}.$
	
	For the case where agent 2 takes the first turn, for any given $\pi_1^\star\in\Pi_1$, we construct the policy $\pi_2$ similarly. At the first turn, agent 2 will deterministically choose $(y_{2, 1}^p, y_{2, 1}^m)$, where $y_{2, 1}^p$ denotes waiting for a proposal, and $y_{2, 1}^m$ denotes an empty string. Now we construct the policy $\pi_2$ at $h=3$ by mimicking the construction of $\pi_2$ at $h=2$ for the case above where the agent 1 takes the first turn. It is again straightforward to verify that $V_{1}(\pi_1^\star, \pi_2)\le\frac{\max_{\pi_1\in\Pi_1}V_{1}(\pi_1, \pi_2)}{|\cY_1^m|}$, thus concluding our proof.
\end{proof}

\subsection{Proof of Proposition \ref{prop:regret}}
\begin{proof}
	We denote the action sequence played by the agent $2$ as $b^{1:T}$. For each $t\in[T]$, we denote the reward vector $f^{t}:=r_1(\cdot, b^t)\in\RR^{|\cA|}$. By the definition of $\pi_1^t$, for each $a\in\cA$, we have
	\$
	 \pi_1^{t}(a) &=\PP\left(a\in \underset{a^\prime\in\cA}{\text{argmax}} \ \mathbb{E}_{b \sim \hat{\pi}_2^{t}} [r_1(a^\prime, b)] + \eta_t \epsilon(a^\prime)\right)\\
	 &=\PP\left(a\in \underset{a^\prime\in\cA}{\text{argmax}} \ \frac{\sum_{t^\prime=1}^{t-1}f^t(a^\prime)}{t-1} + \eta_t \epsilon(a^\prime)\right)\\
	 &=\PP\left(a\in \underset{a^\prime\in\cA}{\text{argmax}} \ \sum_{t^\prime=1}^{t-1}f^t(a^\prime) + (t-1)\eta_t \epsilon(a^\prime)\right).
	\$
	By Theorem 8 of \citep{abernethy2014online}, we have 
	\$
	\max_{\pi_1\in\Delta(\cA)}\sum_{t=1}^{T}\left(\langle \pi_1, f^t\rangle - \langle \pi_1^t, f^t\rangle\right)\le \sqrt{2\log |\cA|}\left( (T-1)\eta^T + \sum_{t=1}^T\frac{\|f^t\|_\infty^2}{(t-1)\eta^t} \right).
	\$ 
	Now by plugging in the choice of $\eta^t=\Theta(1/\sqrt{t})$, we conclude for any policy $\pi_1\in\Delta(\cA)$
	\$
	\sum_{t=1}^{T}\left(\langle \pi_1, f^t\rangle - \langle \pi_1^t, f^t\rangle\right)\le \cO(\sqrt{T\log|\cA|}).
	\$
	By taking expectations w.r.t. the random action sequences $b^{1:T}$ and noting that $\EE_{b^t\sim\pi_2^t}[\langle \pi_1, f^t\rangle]=V_1(\pi_1, \pi_2^t), \EE_{b^t\sim\pi_2^t}[\langle \pi_1^t, f^t\rangle]=V_1(\pi_1^t, \pi_2^t)$ for each $t\in[T]$, we conclude that
	\$
	\EE\left[\text{Regret}(T)\right]\le\cO(\sqrt{T\log|\cA|}).
	\$
\end{proof}
\subsection{Proof of \Cref{thm:error}}
\begin{proof}
\vtwo{
	We consider the case where agent $1$ starts the first, i.e., $P(1)=1$. The case where agent $2$ starts the first can be proved similarly. We will prove by a backward induction on the time step $h$.}
	
	\vtwo{We firstly prove the base case. We denote $h^{\text{exit}}\in[H]$ the last time step agent $1$ takes the action. If $H$ is an odd number, we have $h^{\text{exit}}=H$. In this case, we have for any $\tau_H^t$ 
	\$
	V_{1, H}^{\pi_1^t, \pi_2^{\text{oppo}}}(\tau_H^t)=\EE_{y_{1, H}^t\sim\pi_{1, H}^t(\cdot\given \tau_H^t;\cC^{t-1}, x^1)}\left[r_1(\tau_H^t, y_{1, H}^t)\right]=V_{1, H}^{\pi_1^t, \pi_2^{t}}(\tau_H^t). 
	\$
	Therefore, it holds that 
	\$
	\left|V_{1, H}^{\pi_1^t, \pi_2^{\text{oppo}}}(\tau_H^t)-V_{1, H}^{\pi_1^t, \pi_2^{t}}(\tau_H^t)\right|=0
	\$
	Meanwhile, if $H$ is an even number, we have $h^{\text{exit}}=H-1$. In this case, we have for any $\tau_{H-1}^t$
	\$
	&\left|V_{1, H-1}^{\pi_1^t, \pi_2^{\text{oppo}}}(\tau_{H-1}^t)-V_{1, H-1}^{\pi_1^t, \pi_2^{t}}(\tau_{H-1}^t)\right|\\
	&=\Big|\EE_{y_{1, H-1}^t\sim\pi_{1, H-1}^t(\cdot\given \tau_{H-1}^t;\cC^{t-1}, x^1)}\EE_{y_{2, H}^t\sim\pi_{2, H}^{\text{oppo}}(\cdot\given (\tau_{H-1}^t, y_{1, H-1}^t);\cC^{t-1})}\left[r_1(\tau_{H-1}^t, y_{1, H-1}^t, y_{2, H}^t)\right]\\
	&\qquad - \EE_{y_{1, H-1}^t\sim\pi_{1, H-1}^t(\cdot\given \tau_{H-1}^t;\cC^{t-1}, x^1)}\EE_{y_{2, H}^t\sim\pi_{2, H}^{t}(\cdot\given (\tau_{H-1}^t, y_{1, H-1}^t);\cC^{t-1}, x^2)}\left[r_1(\tau_{H-1}^t, y_{1, H-1}^t, y_{2, H}^t)\right] \Big|\\
	&\le \max_{y_{1, H-1}^t}d_{TV}\left(\pi_{2, H}^{t}(\cdot\given (\tau_{H-1}^t, y_{1, H-1}^t);\cC^{t-1}, x^2), \pi_{2, H}^{\text{oppo}}(\cdot\given (\tau_{H-1}^t, y_{1, H-1}^t);\cC^{t-1})\right)\\
	&\le \epsilon_H,
	\$
	where the last step is by the definition of $\epsilon_H$.}
	
	\vtwo{Now we prove the case where $h< h^{\text{exit}}$ with $P(h)=1$. Note that for any $\tau_h^t$, by Bellman equation, we have
	\$
	&\left|V_{1, h}^{\pi_1^t, \pi_2^{\text{oppo}}}(\tau_{h}^t)-V_{1, h}^{\pi_1^t, \pi_2^{t}}(\tau_{h}^t)\right|\\
	&=\Big|\EE_{y_{1, h}^t\sim\pi_{1, h}^t(\cdot\given \tau_{h}^t;\cC^{t-1}, x^1)}\EE_{y_{2, h+1}^t\sim\pi_{2, h+1}^{\text{oppo}}(\cdot\given (\tau_{h}^t, y_{1, h}^t);\cC^{t-1})}\left[V_{1, h+2}^{\pi_1^t, \pi_2^{\text{oppo}}}(\tau_{h}^t, y_{1, h}^t, y_{2, h+1}^t)\right]\\
	&\qquad - \EE_{y_{1, h}^t\sim\pi_{1, h}^t(\cdot\given \tau_{h}^t;\cC^{t-1}, x^1)}\EE_{y_{2, h+1}^t\sim\pi_{2, h+1}^{t}(\cdot\given (\tau_{h}^t, y_{1, h}^t);\cC^{t-1}, x^2)}\left[V_{1, h+2}^{\pi_1^t, \pi_2^{t}}(\tau_{h}^t, y_{1, h}^t, y_{2, h+1}^t)\right] \Big|\\
	&\le \Big|\EE_{y_{1, h}^t\sim\pi_{1, h}^t(\cdot\given \tau_{h}^t;\cC^{t-1}, x^1)}\EE_{y_{2, h+1}^t\sim\pi_{2, h+1}^{\text{oppo}}(\cdot\given (\tau_{h}^t, y_{1, h}^t);\cC^{t-1})}\left[V_{1, h+2}^{\pi_1^t, \pi_2^{t}}(\tau_{h}^t, y_{1, h}^t, y_{2, h+1}^t)\right]\\
	&\qquad - \EE_{y_{1, h}^t\sim\pi_{1, h}^t(\cdot\given \tau_{h}^t;\cC^{t-1}, x^1)}\EE_{y_{2, h+1}^t\sim\pi_{2, h+1}^{t}(\cdot\given (\tau_{h}^t, y_{1, h}^t);\cC^{t-1}, x^2)}\left[V_{1, h+2}^{\pi_1^t, \pi_2^{t}}(\tau_{h}^t, y_{1, h}^t, y_{2, h+1}^t)\right] \Big|\\
	&\qquad + (\epsilon_{h+3}+\epsilon_{h+5} + \cdots)\\
	& \le \max_{y_{1, h}^t} d_{TV}\left(\pi_{2, h+1}^{\text{oppo}}(\cdot\given (\tau_{h}^t, y_{1, h}^t);\cC^{t-1}), \pi_{2, h+1}^{t}(\cdot\given (\tau_{h}^t, y_{1, h}^t);\cC^{t-1}, x^2)\right) + (\epsilon_{h+3}+\epsilon_{h+5} + \cdots)\\
	&\le \epsilon_{h+1}+\epsilon_{h+3} + \cdots,
	\$
	where we use the inductive hypothesis in the first inequality.}
	{\color{black}
	Finally, by noting that 
	\$
	J_1(\pi_1^1, \pi_2^{\text{oppo}})&=V_{1, 1}^{\pi_1^1, \pi_2^{\text{oppo}}}(\tau_1^h),\\
	J_1(\pi_1^t, \pi_2^t)&=V_{1, 1}^{\pi_1^1, \pi_2^{t}}(\tau_1^h),
	\$
	we proved the near optimality of the policy $\hat{\pi}_1^t$ by the non-expansiveness of the max operator.}\end{proof}

%% file: BaselineDiscussion.tex
{\color{black}
\section{Discussions and Implementations of Additional Baselines}\label{app:baselines}
Here we provide a detailed discussion on the three additional approaches from \cite{fu2023improving}, \cite{xu2023exploring}, as well as \cite{yu2025llm} considered in \Cref{sec:inf}.
\begin{itemize}
	\item \textbf{For \cite{fu2023improving}:} It introduces an additional critic at the beginning of each episode. The critic maintains all history and provides three (high-level) suggestions/feedbacks on how to improve the rewards in the next episode. Since the experimental setting resembles us, we can directly reuse its prompt in our implementations. 
	\item \textbf{For \cite{xu2023exploring}:} The primary goal of \cite{xu2023exploring} is to handle the issues of long contexts due to history accumulation in Werewolf games. Thanks to the recent advances of LLMs, long contexts are no longer significant issues in our experiments. The core idea of \cite{xu2023exploring} is to retrieve one negative experience and several good experiences from the history. Then such experiences together with a short suggestion are fed to the acting agent at each decision-making step. We call this approach selective experience reflection. Therefore, we mirror such implementation in our negotiation games and rank the decision in the entire negotiation history at each time step according to a score, which combines the final reward signal of that episode and a score from a critic.
	\item \textbf{For \cite{yu2025llm}}: It introduces an opponent model to predict the private information (specifically, player's role), in the WITU game. Then such private information is also fed into the acting agent for better decision-making. To mirror such implementation, we let the opponent model predict the private information in our setting, i.e., production cost of the seller/budget of the buyer. Note that the opponent model in \citep{yu2025llm} is \emph{not} used for simulation.
\end{itemize}

Finally, we remark the fundamental technical difference between our work and these related works: all the three works focus on how to provide better contexts/input prompts for the acting agent, while the output of acting agent is kept \emph{native}. In contrast, we study how to \emph{sharpen} the output distribution most effectively, while necessary prompt engineering is also required but perpendicular to our major focus.
}

%% file: Algorithm.tex
\vspace{-5mm}
\section{Detailed Description of Our Framework}\label{app:description}
In \Cref{alg:no-regret-oppo}, we describe the decision-making process using the perspective of the agent 1 for total $T$ episodes. At each episode $t\in[T]$, each time step $h\in[H]$, if it is agent 2's turn, i.e. $P(h)=2$, agent 1 will observe the action $y_{2, h}^{t}$ from the opponent and update the partial trajectory. Otherwise, it will implement our BoN framework as in \Cref{sec:method}. Finally, we refer a graphical illustration of our framework to \Cref{fig:graph}.

\begin{algorithm}[!t]
\caption{\texttt{BoN-Opponent-Simulation} (from the perspective of agent 1)}\label{alg:no-regret-oppo}
\begin{algorithmic}[1]
\State  \textbf{Input:} $\pi^{\text{base}}_1$,$\pi^{\text{oppo}}_2$, $x_1$, $N$, $T$, $H$

\For{$t \in [T]$}
    \For{$h \in [H]$}
    \If{$P(h)=1$}
    	\For{$k \in [N]$}
    	\State Sample action $y_{1, h}^{t, k}\sim\pi^{\text{base}}_1(\cdot\given \tau_{h}^t; \cC^{t-1}, x_1)$
    	\State Simulate the episodes by first taking action $y_{1, h}^{t, k}$ and then following $(\pi^{\text{base}}_1, \pi^{\text{oppo}}_2)$ towards the end of the episode
    	\State Denote $\hat{r}^k_1$ as the empirical average of the reward from the simulated trajectories
    	\EndFor   	
    	\State $k^\star\leftarrow \argmax_{k\in[N]}\hat{r}^{k}_1$
    	\State Take the action $y_{1, h}^{t, k^\star}$
    	\State Update the partial trajectory $\tau_{h+1}^t\leftarrow (\tau_{h}^t, y_{1, h}^{t, k^\star})$
    \Else
    	    	\State Observe the opponent action $y_{2, h}^{t}$ 
    	\State Update the partial trajectory $\tau_{h+1}^t\leftarrow (\tau_{h}^t, y_{2, h}^t)$
    \EndIf
    \EndFor
    \State Update the context $\cC^t\leftarrow (\cC^{t-1}, \tau_{H+1}^t)$
\EndFor
\end{algorithmic}
\end{algorithm}

\vspace{-5mm}
\section{Example Outputs of Our Agents}\label{supp:output}
We refer the example outputs of our agents to the anonymous link\\ \href{https://github.com/llmnegotiationsubmission/llmnegotiationsubmission}{https://github.com/llmnegotiationsubmission/llmnegotiationsubmission}.
\vspace{-3mm}

%% file: AdditionalResults.tex
\section{Additional Experimental Results}
\begin{figure}[!t]
    \centering
\includegraphics[width=\linewidth]{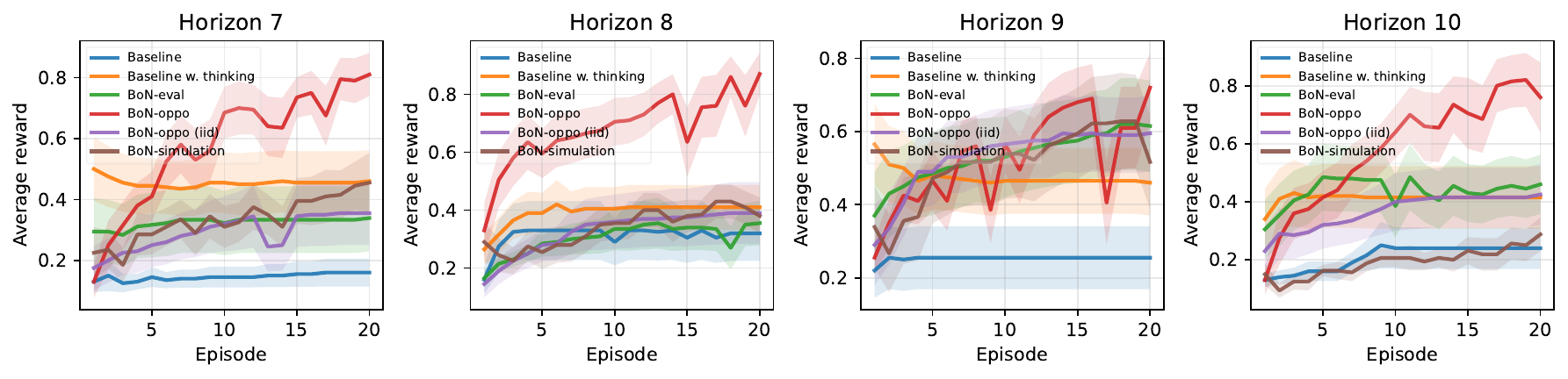}
    \caption{Seller's average rewards (normalized by $20$) in games with different horizons.}
    \label{fig:seller}
    \vspace{-3mm}
\end{figure}

\begin{figure}[!t]
\centering\includegraphics[width=1\textwidth]{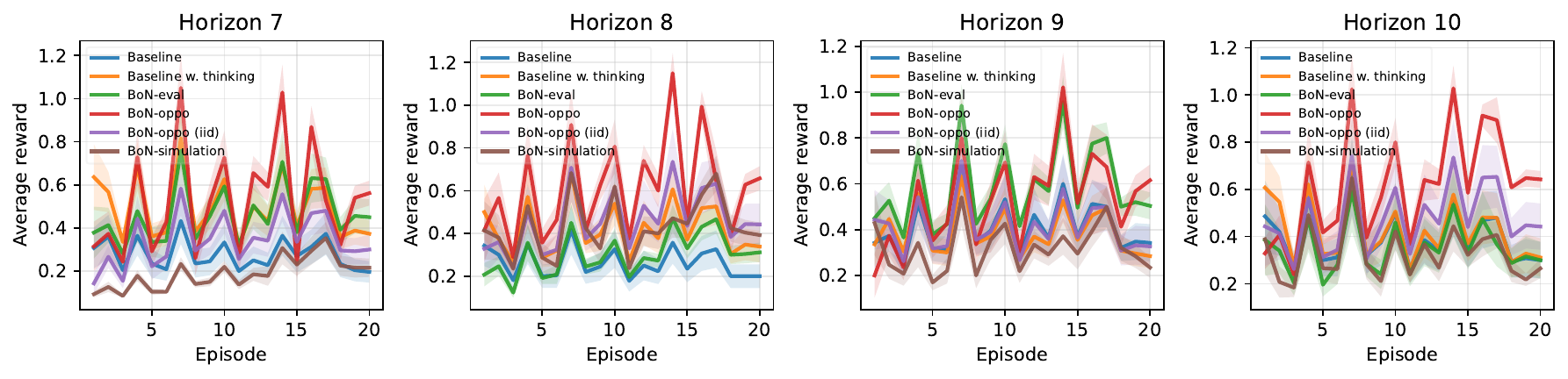}
\caption{Seller's average rewards (normalized by the difference between the buyer's maximum willingness to pay and seller's production cost) in games where the buyer's maximum willingness to pay is uniformly sampled at the beginning of each episode. }
\label{fig:seller-iid}
\end{figure}

\begin{figure}[!t]
\centering\includegraphics[width=1\textwidth]{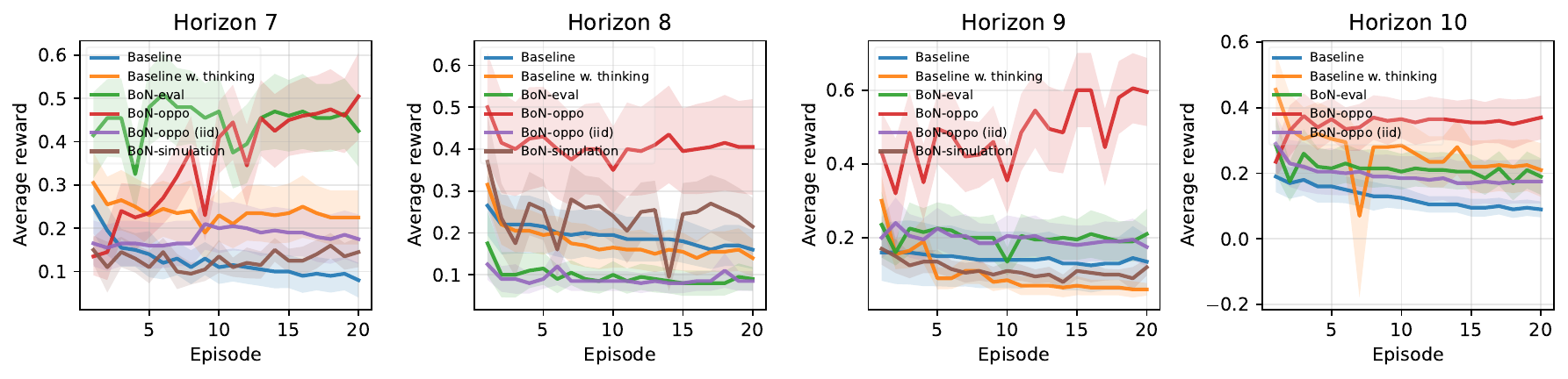}
\caption{Seller's average rewards (normalized by $20$) in games when competing against the buyer also adopting algorithm. }
\label{fig:seller-strategic-oppo}
\end{figure}

\vspace{-3mm}
\begin{figure}[!t]
\centering\includegraphics[width=1\textwidth]{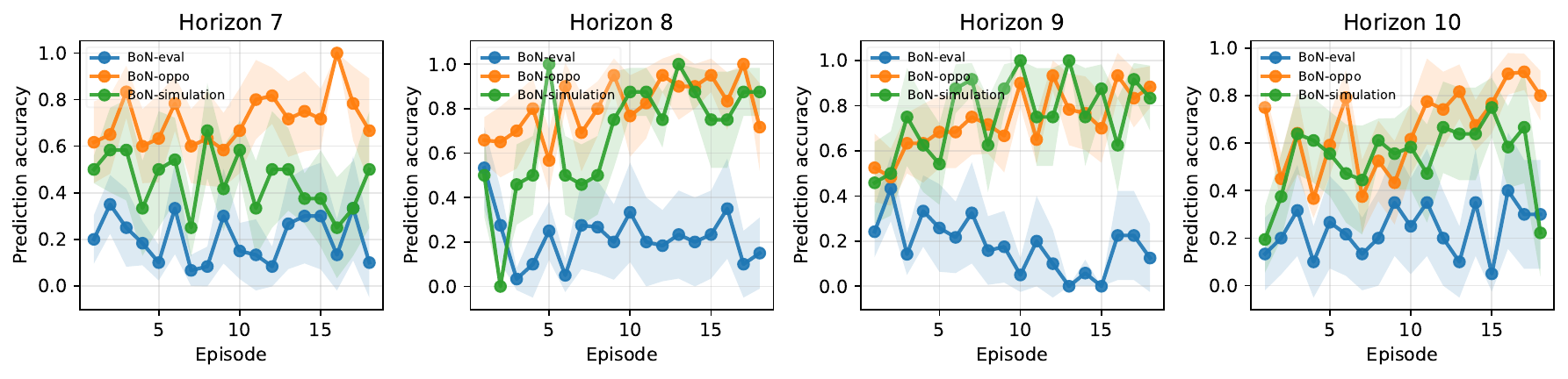}
\caption{Buyer's accuracy of selecting the best candidate.}
\vspace{-4mm}
\label{fig:buyer-acc}
\end{figure}

\begin{figure}[!t]
\centering\includegraphics[width=1\textwidth]{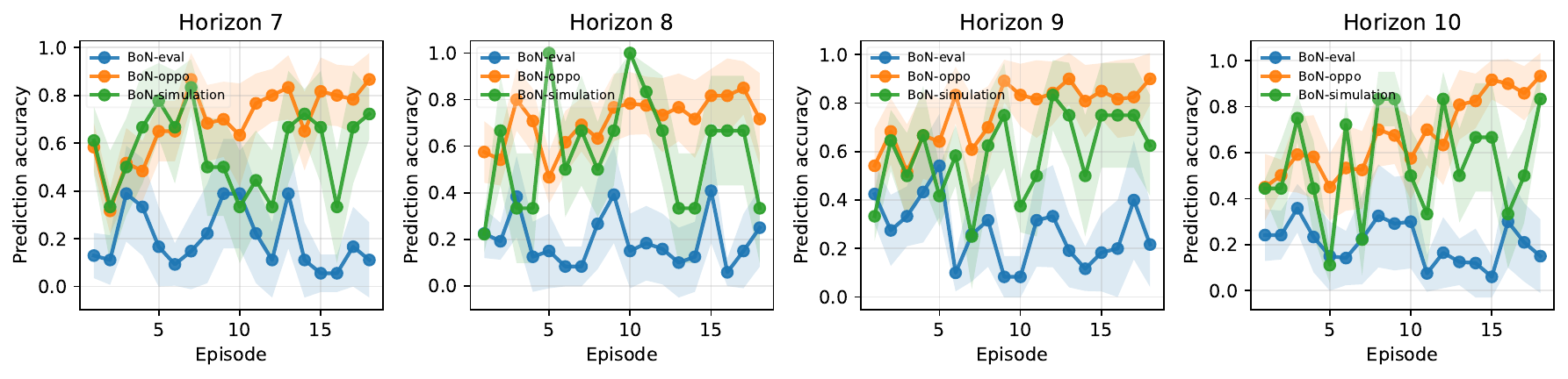}
\caption{Seller's accuracy of selecting the best candidate.}
\label{fig:seller-acc}
\end{figure}

\begin{figure}[!t]
\centering\includegraphics[width=1\textwidth]{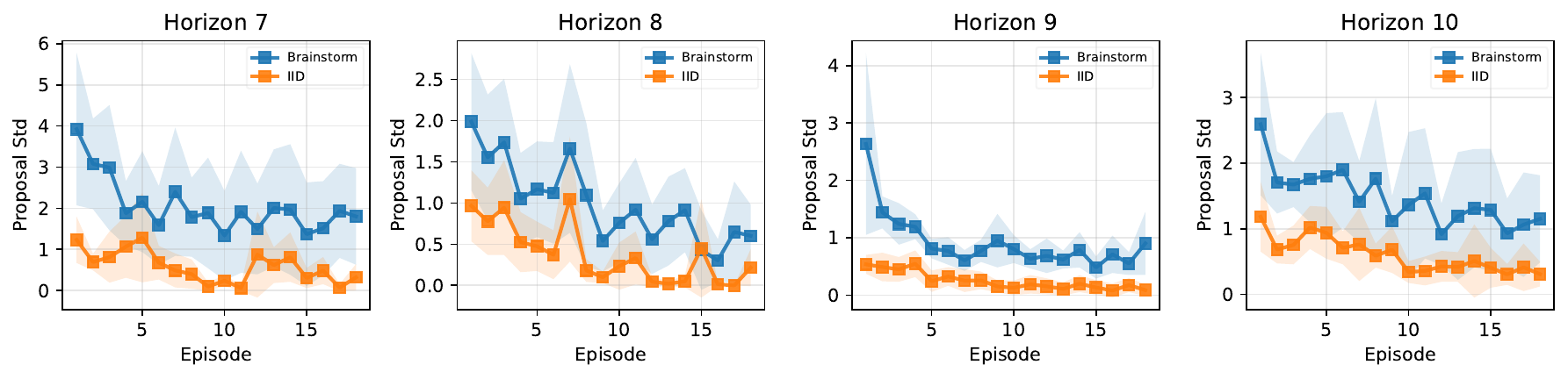}
\caption{Buyer's proposal standard deviation.}
\vspace{-5mm}
\label{fig:buyer-std}
\end{figure}

\begin{figure}[!t]
\centering\includegraphics[width=1\textwidth]{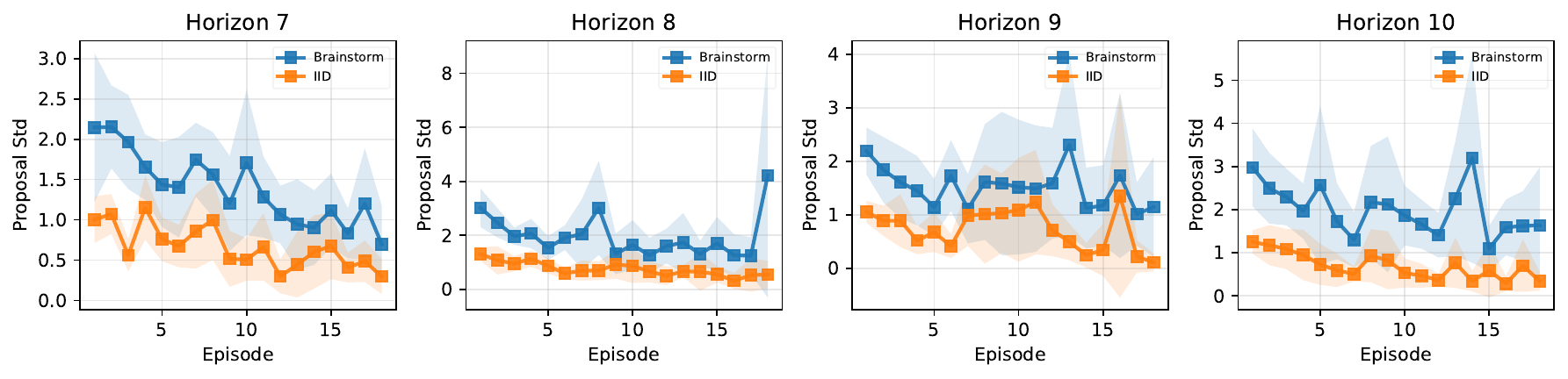}

\caption{Seller's proposal standard deviation.}
\label{fig:seller-std}
\end{figure}

\begin{figure}[!t]
    \centering
\includegraphics[width=0.24\textwidth]{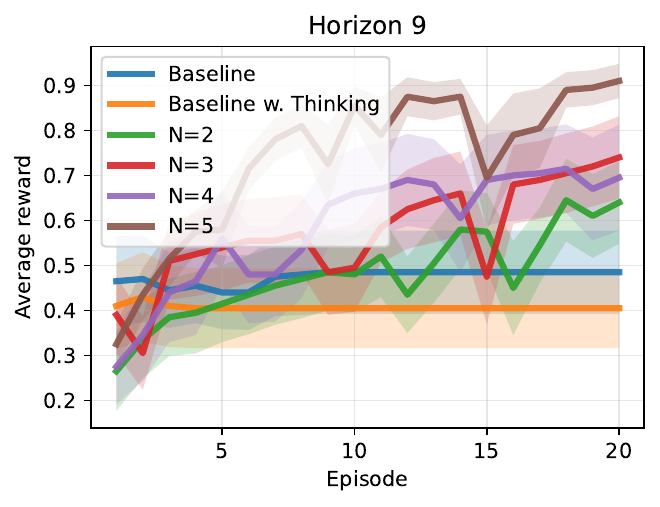}\includegraphics[width=0.24\textwidth]{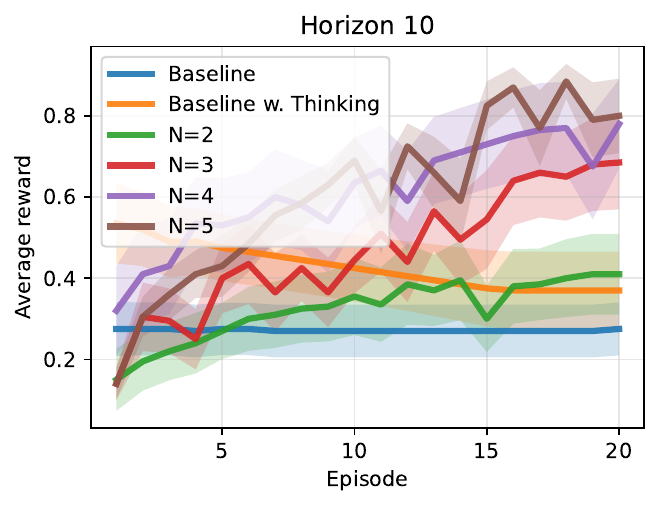}
\includegraphics[width=0.25\textwidth]{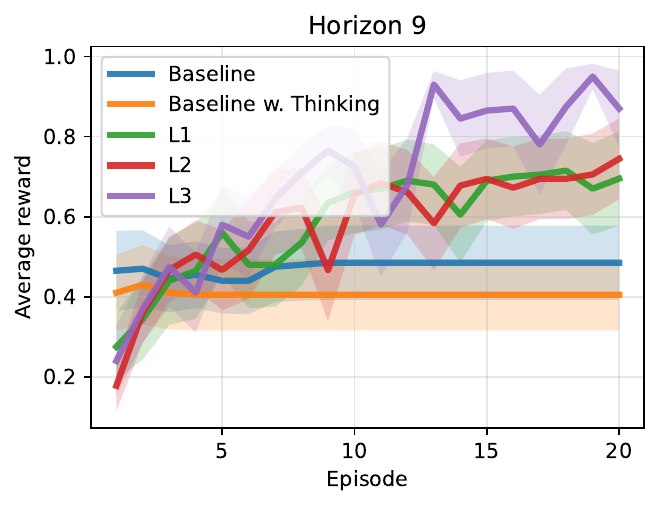}
\includegraphics[width=0.24\textwidth]{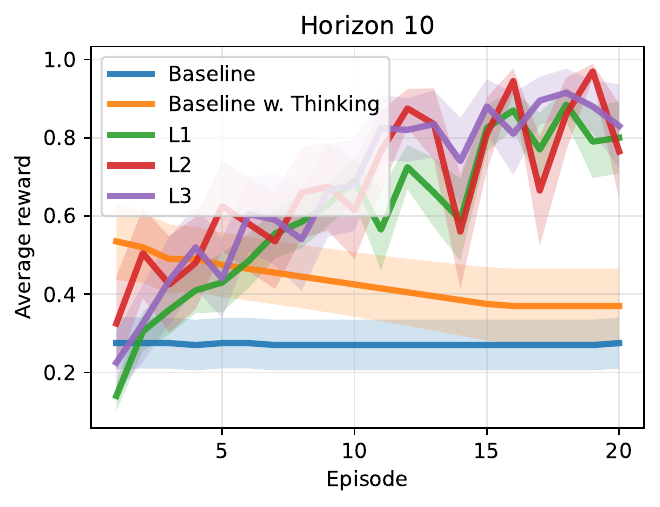}
    \caption{Results for scaling the number of candidates and higher-order BoN in the buyer-seller negotiation games.}
    \label{fig:scaling}
\end{figure}

\begin{figure}[!t]
\centering\includegraphics[width=1\textwidth]{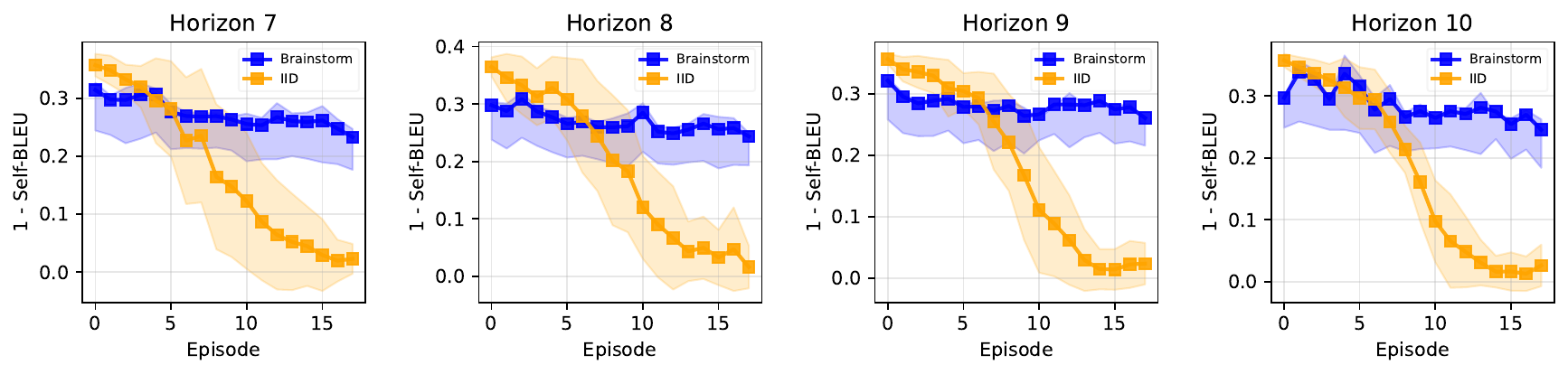}
\vspace{-6mm}
\caption{\vtwo{Diversity of buyer's candidate messages measured by $1-\mathrm{Self}\_\mathrm{BLEU}$ \citep{shaib2024standardizing}.}}
\label{fig:buyer-semantic-new}
\end{figure}

\begin{figure}[!t]  
\vspace{-4mm}
\centering\includegraphics[width=0.35\textwidth]{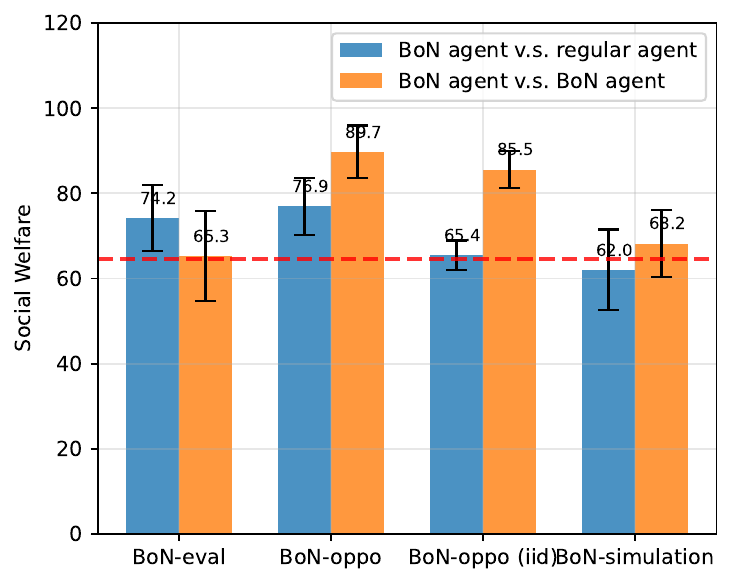}
  \caption{Results on social welfare, where the red line the social welfare when the baseline regular agent interacts with the baseline regular agent}
  \label{fig:sw} 
  \vspace{-4mm}
\end{figure}
\input{Computation.tex}

\input{Oppogeneralization.tex}

%% file: Computation.tex
\begin{table*}[!t]
\centering
\small
\setlength{\tabcolsep}{4pt}
\renewcommand{\arraystretch}{1.15}
\resizebox{0.9\textwidth}{!}
{\color{black}
\begin{tabular}{llcccccc}
\toprule
\textbf{Model} & \textbf{Metric} & \textbf{Baseline} & \textbf{N=2} & \textbf{N=4} & \textbf{N=6} & \textbf{N=8} & \textbf{N=10} \\
\midrule
\multirow{2}{*}{\textbf{Gemini}} & Reward & \textemdash & +4.27 $\pm$ 3.87 & +12.13 $\pm$ 5.86 & +12.93 $\pm$ 3.46 & +11.60 $\pm$ 3.13 & +12.73 $\pm$ 5.43 \\
 & Token usage & 14.583 & 17.187 & 18.084 & 19.050 & 19.258 & 19.432 \\
\midrule
\multirow{2}{*}{\textbf{Claude}} & Reward & \textemdash & +3.07 $\pm$ 5.80 & +4.73 $\pm$ 4.89 & +3.67 $\pm$ 3.92 & +5.33 $\pm$ 5.76 & +6.47 $\pm$ 5.33 \\
 & Token usage          & 15.078            & 17.429           & 18.813            & 19.262            & 19.656            & 19.762 \\
\midrule
\multirow{2}{*}{\textbf{Qwen}} & Reward & \textemdash & +3.62 $\pm$ 6.27 & +10.67 $\pm$ 8.68 & +11.47 $\pm$ 1.58 & +12.80 $\pm$ 5.93 & +10.07 $\pm$ 8.00 \\
 & Token usage & 15.079            & 17.375           & 18.275            & 18.591            & 19.125            & 19.588 \\
\midrule
\multirow{2}{*}{\textbf{Llama}} & Reward & \textemdash & +1.93 $\pm$ 6.79 & +8.82 $\pm$ 5.90  & +13.33 $\pm$ 6.13 & +14.10 $\pm$ 6.14 & +14.60 $\pm$ 8.89 \\
 & Token usage & 14.911            & 16.746           & 17.375            & 18.167            & 18.375            & 18.577 \\
\midrule
\bottomrule
\end{tabular}}
\caption{\vtwo{Average \emph{performance boost} over $20$ repeated runs and the $\log_2$ number of tokens for different BoN configurations. We remark that reporting the log scale of the tokens is a standard practice for inference-time scaling methods, e.g., \citep{brown2024large, muennighoff2025s1}.}}
\label{tab:computation-buysell}
\end{table*}

%% file: Oppogeneralization.tex
\begin{table}[!t]
\centering
\small
\setlength{\tabcolsep}{4pt}
\renewcommand{\arraystretch}{1.15}
{\color{black}
\begin{tabular}{llcc}
\toprule
\textbf{Model} & \textbf{Method} & \textbf{Buyer} & \textbf{Seller} \\
\midrule
\multirow{5}{*}{\textbf{\begin{tabular}[c]{@{}c@{}}Gemini against\\ Claude\end{tabular}}} & Baseline w. thinking & +0.63 $\pm$ 0.60          & +0.80 $\pm$ 3.75          \\
                                                & BoN-eval             & +1.03 $\pm$ 1.83          & -0.50 $\pm$ 2.44          \\
                                                & BoN-simulation       & +5.53 $\pm$ 3.45          & +1.03 $\pm$ 4.39          \\
                                                & BoN-oppo (iid)       & +3.23 $\pm$ 3.99          & -0.10 $\pm$ 4.29          \\
                                                \rowcolor{gray!30} & \textbf{BoN-oppo}             & \textbf{+6.94 $\pm$ 2.96} & \textbf{+4.86 $\pm$ 2.43} \\
\midrule
\multirow{5}{*}{\textbf{\begin{tabular}[c]{@{}c@{}}Gemini against \\ Qwen\end{tabular}}} & Baseline w. thinking & +0.30 $\pm$ 0.46          & -0.57 $\pm$ 3.71          \\
                                                & BoN-eval             & +1.50 $\pm$ 1.50          & -1.61 $\pm$ 5.45          \\
                                                & BoN-simulation       & +2.17 $\pm$ 4.33          & +2.63 $\pm$ 6.67          \\
                                                & BoN-oppo (iid)       & +0.60 $\pm$ 1.02          & +0.10 $\pm$ 5.24          \\
                                                \rowcolor{gray!30} &  \textbf{BoN-oppo}             & \textbf{+5.43 $\pm$ 3.57} & \textbf{+4.07 $\pm$ 2.26} \\
\midrule
\multirow{5}{*}{\textbf{\begin{tabular}[c]{@{}c@{}}Gemini against \\ Llama\end{tabular}}} & Baseline w. thinking & -2.19 $\pm$ 6.51          & +1.73 $\pm$ 5.09          \\
                                                & BoN-eval             & +2.18 $\pm$ 3.50          & -0.81 $\pm$ 5.95          \\
                                                & BoN-simulation       & +5.41 $\pm$ 2.55          & +2.53 $\pm$ 7.20          \\
                                                & BoN-oppo (iid)       & +0.76 $\pm$ 7.07          & +3.50 $\pm$ 7.81          \\
                                                \rowcolor{gray!30} & \textbf{BoN-oppo}             & \textbf{+5.56 $\pm$ 2.57} & \textbf{+5.23 $\pm$ 2.60} \\
\bottomrule
\end{tabular}}

\caption{Results for our approach and baselines powered by Gemini playing against opponents powered by different base models. Bold indicates best average reward per model.}
\label{tab:oppo-generalization}
\end{table}